\documentclass[11pt,fleqn]{article}

\usepackage[margin=1in]{geometry}
\usepackage[authoryear,longnamesfirst]{natbib}
\usepackage{amsmath,amssymb}
\usepackage{amsthm}
\usepackage{header}
\usepackage[colorlinks=true,linkcolor=blue,citecolor=blue,urlcolor=blue]{hyperref}

\newenvironment{pf}[1][]{%
  \if\relax\detokenize{#1}\relax
    \begin{proof}%
  \else
    \begin{proof}[Proof (#1)]%
  \fi
}{\end{proof}}

\usepackage{pgfplots}
\pgfplotsset{compat=1.18}
\definecolor{cbblue}{RGB}{0,114,178}
\definecolor{cbverm}{RGB}{213,94,0}
\definecolor{cbgreen}{RGB}{0,158,115}
\definecolor{cbpurple}{RGB}{106,61,154}
\pgfplotsset{
  maxmin axis/.style={
    width=\linewidth, height=0.82\linewidth,
    xmin=-0.02, xmax=1.02, ymin=0, ymax=0.66,
    axis lines=left, line cap=round,
    xlabel={$p$},
    tick label style={font=\footnotesize},
    label style={font=\footnotesize},
    every axis plot/.append style={line join=round},
  }
}

\title{Zero-Sum Fictitious Play Cannot Converge to a Point}
\author{Jaehong Moon\thanks{Department of Industrial and Enterprise
  Systems Engineering, University of Illinois Urbana-Champaign, Urbana,
  IL, USA. Email: \texttt{jm133@illinois.edu}.}}
\date{}

\begin{document}
\maketitle

\begin{abstract}
Fictitious play is a history-based learning process in which players
best respond to the empirical distribution of their opponents' past play.
Although classical results show that fictitious play converges to the set
of Nash equilibria in zero-sum games, this set convergence does not imply
convergence to a single equilibrium when the equilibrium set is non-singleton.
This paper shows that pointwise convergence fails in a strong sense.
Suppose the equilibrium set of a player has positive measure and
consists only of fully mixed strategies. Then, under any tie-breaking
rule, the empirical strategy of that player cannot converge to any
equilibrium point, provided it is initialized outside the equilibrium
set---in particular, when the player starts with no prior beliefs.
The proof identifies two mechanisms behind this instability.
In the interior of the equilibrium set, the dynamics retain inertia
that prevents settling. At the boundary, the opponent's deviations
from its unique equilibrium action steadily unbalance the cumulative
payoffs that drive best responses, so that not all actions can remain
competitive, as a fully mixed limit would require. The results clarify the gap between convergence to an
equilibrium set and convergence to an equilibrium point in fictitious play.
\end{abstract}

\medskip
\noindent\textbf{Keywords:} Fictitious play; Zero-sum games; Learning in
games; Equilibrium selection; Pointwise convergence.

\section{Introduction}
Fictitious play (FP), introduced by \citet{brown_fp}, is a natural learning process for finding a Nash equilibrium (NE) in a strategic normal-form game. In FP, each player is myopic and chooses its action based solely on the observed history of the opponents’ actions. Since each player only needs to keep track of the payoffs of its own actions at each iteration, FP provides a fundamental and flexible learning procedure.

A line of research has studied which classes of games admit convergence of FP to a NE. Such classes are said to have the fictitious play property (FPP). Notably, FP converges in zero-sum games~\citep{Robinson}, \(2\times 2\) games~\citep{miyasawa,2by2_geom}, potential games~\citep{potential}, games with identical interests~\citep{identical}, and (nondegenerate) \(2\times n\) games~\citep{2byn_fpp}. However, not every game has the FPP; a well-known counterexample was given by \citet{shapley_game}.

A closer look reveals that the FPP may depend on the specific tie-breaking rule. \citet{miyasawa} showed that \(2\times 2\) games have the FPP under a particular tie-breaking rule. However, it was later shown by \citet{2by2_without} that, under a different tie-breaking rule, FP may fail to converge to a NE in some \(2\times 2\) games. The FPP for \(2\times 2\) games independently of the tie-breaking rule was subsequently established by \citet{identical,2byn_fpp}, under an additional nondegeneracy assumption.

Although the primary focus of FP research is whether FP converges to a NE, a natural question arises when multiple NE exist: which equilibrium does FP learn? Does FP learn different equilibria under different tie-breaking rules? If so, can we characterize the properties of the equilibria learned under the corresponding tie-breaking rules? The zero-sum setting is a natural starting point for these questions: recall that convergence to the NE set is guaranteed~\citep{Robinson}. Moreover, the NE set of a zero-sum game is closed, convex, and hence connected, through its connection to linear programming~\citep{adler,dantzig1951proof}, so it may contain a continuum of equilibria among which FP must implicitly select. Motivated by these observations, in this work we investigate the pointwise convergence of FP dynamics in zero-sum games.

The main result of this paper is that, in certain zero-sum games, FP cannot converge to a single point under any sequence of best responses. Our non-convergence result is based on two conditions: (Assumption~\ref{asspt:positive_measure}) the NE set for one player has positive measure, and (Assumption~\ref{asspt:boundary}) every NE in that set is fully mixed; in addition, we normalize the game by removing strictly dominated actions (Assumption~\ref{asspt:no_dominated}) and require the player's initial empirical strategy to lie outside its equilibrium set, as is the case when the player starts with no prior beliefs. At a high level, the proof treats interior and boundary equilibria separately. In the interior of the NE set, the player's dynamics exhibit inertia (Lemma~\ref{lemma:inertia}), causing the trajectory to escape the NE set rather than settle at a single equilibrium. At boundary equilibria, we show that the opponent's deviations from its unique equilibrium action never reward all of the player's actions equally (Lemma~\ref{lemma:no_constant}); consequently, the cumulative payoffs that govern the player's best responses develop an imbalance proportional to the number of deviations, and a counting argument shows that the player cannot keep all of its actions competitive while its empirical strategy remains in a small neighborhood of a fixed boundary equilibrium (Proposition~\ref{prop:boundary_instability}).

The paper is organized as follows. In Section~\ref{sec:tie_non_conv}, we begin with simple examples showing that tie-breaking can strongly influence the pointwise behavior of fictitious play. In Section~\ref{sec:motivation}, we construct zero-sum games with nontrivial Nash equilibrium sets of positive measure and present examples suggesting tie-breaking-independent non-convergence. Section~\ref{sec:main_result} introduces Assumptions~\ref{asspt:positive_measure} and~\ref{asspt:boundary} and states our main non-convergence theorem. The proof is given in Section~\ref{sec:proof}: interior and boundary points of the equilibrium set are handled by an inertia argument and a counting argument, respectively. Finally, Section~\ref{sec:discussion} examines examples related to the necessity of Assumptions~\ref{asspt:positive_measure} and~\ref{asspt:boundary} and the role of initialization, and discusses the implications of our results for observable stopping criteria in learning.

\textbf{Notation}. 
Throughout the paper, lowercase letters denote scalars, and boldface letters denote vectors, and $A$ represents a matrix. When referring to a concrete example, we place a tilde on the corresponding notation (e.g., $\tilde{A}$ for a particular matrix instance).

Let \([n] := \{1,2,\ldots,n\}\) for \(n \in \NN\).
We define the all-one vector as \(\onevector_n = (1,\dots,1)^\top \in \RR^n\) and the all-zero vector as \(\zerovector_n = (0,\dots,0)^\top \in \RR^n\), and denote the standard basis of \(\RR^n\) by \(\basis_n=\{\bfe_1,\ldots,\bfe_n\}\).
Whenever the dimension \(n\) is clear from context, we simply write \(\bfe_i\) without explicitly specifying that \(\bfe_i \in \RR^n\).
For a matrix \(A\in \RR^{n\times m}\) and \(i \in [m]\), let \(A_{-i}\) be the \(n \times (m-1)\) matrix obtained by removing the \(i\)-th column of \(A\).
For a vector \(\bfw \in \RR^k\) and \(i \in [k]\), let \([\bfw]_i\) be the \(i\)-th component of \(\bfw\) and \((\bfw)_{-i}\in \RR^{k-1}\) be the vector obtained by removing the \(i\)-th component of \(\bfw\).

Let
$S_n=\{\bfx\in \RR^n : x_i\ge 0 \text{ for all } i\in[n],\ \sum_{i=1}^n x_i =1\}$
denote the probability simplex in $\RR^n$. We regard $S_n$ as a subset of its affine hull $\operatorname{aff}(S_n)$, equipped with the relative topology, and identify $\operatorname{aff}(S_n)$ with $\RR^{n-1}$ via the homeomorphism $(x_1,\dots,x_n)\mapsto(x_1,\dots,x_{n-1})$. For $U\subseteq \operatorname{aff}(S_n)$, we write $\intr(U)$ and $\partial U$ for its interior and boundary in this topology; when $U\subseteq S_n$, these are the relative interior and relative boundary of $U$. Likewise, we let $\mu$ denote the measure on $S_n$ obtained from Lebesgue measure on $\RR^{n-1}$ under this identification.

\section{Preliminaries}
\subsection{Zero-sum Game}
A two-player normal-form game $(\{\Ac_1,\Ac_2\},\{u_1,u_2\})$ with action sets $\Ac_1=[n]$ and $\Ac_2=[m]$ is called a zero-sum game if its payoff functions satisfy $u_1(i,j)=-u_2(i,j)=A_{ij}$ for some matrix $A \in \RR^{n \times m}$. Such a game is fully described by the matrix $A$, which is referred to as the payoff matrix of the game. We mainly consider the mixed extension of the game, in which a mixed strategy for each player is a probability distribution over its actions.
We represent Player~1's mixed strategy by ${\bfx\in S_n}$ and Player~2's mixed strategy by $\bfy\in S_m$. Given mixed strategies $(\bfx,\bfy)$, the payoffs of Player~1 and Player~2 are $\bfx^\top A \bfy$ and $-\bfx^\top A \bfy$, respectively. 

For a payoff matrix $A \in \RR^{n \times m}$, let $\bfc_i$ and $\bfr_j$ denote its $i$-th column and $j$-th row, respectively, i.e.,
\begin{align}\label{eq:mat_col_row}
     A \;=\; 
     \begin{pmatrix}
         \vert &        & \vert \\
         \bfc_1   & \cdots & \bfc_m   \\
         \vert &        & \vert
     \end{pmatrix}
     \;=\;
     \begin{pmatrix}
         \text{---}\hspace{-0.2cm} & \bfr_1^\top & \hspace{-0.2cm}\text{---} \\
                                   & \vdots   &                          \\
         \text{---}\hspace{-0.2cm} & \bfr_n^\top & \hspace{-0.2cm}\text{---}
     \end{pmatrix}.
\end{align}

In the zero-sum setting, we call a tuple $(\bfx^*, \bfy^*) \in S_n\times S_m$ a (mixed) Nash equilibrium (NE) if $(\bfx^*, \bfy^*)$ is a saddle point of $\bfx^\top A \bfy$, i.e.,
\begin{align*}
    {\bfx^*}^\top A \bfy^* \ge \bfx^\top A \bfy^* \quad \forall \bfx \in S_n, \quad \text{and} \quad
    {\bfx^*}^\top A \bfy^* \le {\bfx^*}^\top A \bfy \quad \forall \bfy \in S_m.
\end{align*}
In particular, $(\bfx^*, \bfy^*)$ is a NE if and only if $\bfx^*$ and $\bfy^*$ solve
\begin{align}
    &\max_{\bfx \in S_n}\;\min_{i \in [m]} \;\bfc_i^\top \bfx,\quad\text{and }\label{eq:lp_duality_x}\\
    &\min_{\bfy \in S_m}\;\max_{j \in [n]} \;\bfr_j^\top \bfy,\label{eq:lp_duality_y}
\end{align}
respectively.

Since the two problems \eqref{eq:lp_duality_x} and \eqref{eq:lp_duality_y} are independent, we define the NE set of Player~1 by $\nex$ and that of Player~2 by $\ney$. Since the above problems are convex programs over compact domains, the solution sets $\nex$ and $\ney$ are closed and convex. By von Neumann's minimax theorem~\citep{min_max}, for every $\bfx^* \in \nex$ and $\bfy^* \in \ney$, the pair $(\bfx^*, \bfy^*)$ is a NE. Hence, if we denote the set of all NE of the game by $\NE$, we can write
\begin{align*}
    \NE=\nex\times \ney.
\end{align*}
In addition, for any $(\bfx^*, \bfy^*)\in \nex \times \ney$, the payoff ${\bfx^*}^\top A \bfy^*$ is equal to the value $v$ of the game, where
\begin{align*}
    v:=\min_{\bfy\in S_m}\max_{\bfx\in S_n}\bfx^\top A\bfy
    =\max_{\bfx\in S_n}\min_{\bfy\in S_m}\bfx^\top A\bfy.
\end{align*}


\subsection{Fictitious Play (FP)}\label{sec:fp_set_up}

For a two-player game, FP is an iterative process in which each player selects a best response to the empirical frequency of the opponent's past actions. We define the best-response correspondences of the two players as set-valued maps from the opponent's mixed strategy to the set of pure actions. More precisely, in the zero-sum setting, we define the best-response correspondences of the row and column players by $\brx:S_m \cup \{\zerovector_m\}\rightrightarrows [n]$ and $\bry:S_n\cup \{\zerovector_n\}\rightrightarrows [m]$, respectively, where
\begin{align*}
    \brx(\bfy)
    :&= \argmax_{i\in [n]} \bfr_i^\top \bfy \\
    &= \bigl\{ i \in [n] : \bfe_i^\top A \bfy \ge \tilde \bfx^\top A \bfy \quad \forall \tilde \bfx \in \basis_n \bigr\},\\
    \bry(\bfx)
    :&= \argmin_{j\in [m]} \bfc_j^\top \bfx \\
    &= \bigl\{ j \in [m] : \bfx^\top A \bfe_j \le \bfx^\top A \tilde \bfy \quad \forall \tilde \bfy \in \basis_m \bigr\}.
\end{align*}
Note that we extend the best response to $\zerovector$: $\brx(\zerovector_m)=[n]$ and $\bry(\zerovector_n)=[m]$. Here, the notation $F:X\rightrightarrows Y$ denotes a set-valued map, that is, a map assigning to each $x\in X$ a subset $F(x)\subseteq Y$.

Central to our analysis are the (closed) best-response polyhedral regions defined by
\begin{align*}
    X_i := \bigl\{\bfx \in S_n : \bfc_i^\top \bfx \le \bfc_k^\top \bfx,\; \forall k \in [m]\bigr\},\quad \text{and}\quad
    Y_j := \bigl\{\bfy \in S_m : \bfr_j^\top \bfy \ge \bfr_k^\top \bfy,\; \forall k \in [n]\bigr\},
\end{align*}
for all $i\in [m]$ and $j\in [n]$. 
Note that if $\bfx\in X_i$, then Action~$i$ is among the best responses of Player~2, that is, $i\in \bry(\bfx)$. 
Similarly, $\bfy\in Y_j$ implies that Action~$j$ is among the best responses of Player~1.

We define FP (or FP dynamics) as a sequence $\{(\hat\bfx(k), \hat\bfy(k))\}_{k\ge 0}$ evolving according to
\begin{align}\label{eq:dynamics}
    \begin{split}
    \hat\bfx(k + 1)  = \frac{(k + k_1)\,\hat\bfx(k)\;+\;\bfe_{p_k}}{k + k_1 + 1},\quad
    \hat\bfy(k + 1)  = \frac{(k + k_2)\,\hat\bfy(k)\;+\;\bfe_{q_k}}{k + k_2 + 1},
    \end{split}
\end{align}
where $p_k \in \brx\bigl(\hat\bfy(k)\bigr)$ and $q_k \in \bry\bigl(\hat\bfx(k)\bigr)$. The process is initialized by parameters $k_1,k_2 \ge 0$ and vectors $\hat\bfx(0),\hat\bfy(0)$, where $\hat\bfx(0)\in S_n$ if $k_1>0$ and $\hat\bfx(0)=\mathbf{0}$ if $k_1=0$, and similarly $\hat\bfy(0)\in S_m$ if $k_2>0$ and $\hat\bfy(0)=\mathbf{0}$ if $k_2=0$. Intuitively, $\hat\bfx(0)$ and $\hat\bfy(0)$ represent prior beliefs about the opponent’s strategy, while $k_1$ and $k_2$ quantify each player’s confidence in these priors.

If multiple best responses arise at some step, each player selects one of them according to a tie-breaking rule. We allow arbitrary tie-breaking rules, possibly depending on $\hat\bfx(k)$, $\hat\bfy(k)$, $k$, randomness, or additional state variables, as long as the selected action remains in the corresponding best-response set. For $k\ge 1$, the vectors $\hat\bfx(k)$ and $\hat\bfy(k)$ can be interpreted as the empirical frequencies of actions played by each player up to step $k$.

\citet{Robinson} showed that, for any zero-sum game, under any tie-breaking rule and from any admissible initial condition, the corresponding FP dynamics $\{(\hat\bfx(k), \hat\bfy(k))\}$ converge to the product set $\NE=\nex\times\ney$. More precisely, for any $\epsilon>0$, there exists $K_\ep \in \NN$ such that
\begin{align*}
    \dist{(\hat\bfx(k), \hat\bfy(k))}{\NE}<\epsilon
\end{align*}
for all $k>K_\ep$, where
\begin{align*}
    \dist{(\bfx,\bfy)}{\NE}:=\inf_{(\bfx^*,\bfy^*)\in \nex\times\ney}\norm{(\bfx,\bfy)-(\bfx^*,\bfy^*)}_2.
\end{align*}

\section{Tie-Breaking Can Cause Non-Convergence}\label{sec:tie_non_conv}
It is clear that convergence to the NE set does not imply convergence of the dynamics to a single limit point. To construct a non-convergent FP trajectory, it is natural to consider modifying the tie-breaking rule. As a simple illustration, suppose there exist $i,j \in [n]$ such that the $i$-th and $j$-th rows of the payoff matrix $A$ are identical. Then $i\in \brx\bigl(\hat \bfy(k)\bigr)$ if and only if $j\in \brx\bigl(\hat \bfy(k)\bigr)$ for every $k$. This suggests that, under a suitable tie-breaking rule, one may induce oscillatory behavior in $[\hat\bfx(k)]_i$ and $[\hat\bfx(k)]_j$.

To isolate the essential issue, consider the case in which Player~1 has two actions, say Actions~$i$ and~$j$, that are always tied as best responses. If we track only the relative frequencies with which these two actions are played, the induced dynamics reduce to FP for the zero-sum game with payoff matrix
\begin{align}\label{eq:zz_mat}
    \tilde A = \begin{pmatrix} 0 \\ 0 \end{pmatrix}.
\end{align}
This is a completely non-strategic game, since every strategy profile is a NE. Thus, any non-convergence must come entirely from the tie-breaking rule rather than from the strategic structure of the game itself. Under many familiar tie-breaking rules, however, FP for the game $\tilde A$ does converge to a point. For example, if the tie-breaking rule always selects the best response with the lowest index, the iteration converges to $(1,0)^\top$. If, instead, one selects among multiple best responses uniformly at random, then the iteration converges almost surely to $(1/2,1/2)^\top$. This leads to the main question of this section: can simple tie-breaking rules already generate non-convergent FP behavior for the game with payoff matrix~\eqref{eq:zz_mat}?

Suppose Player~1 uses a one-bit memory state $m(k)$. Fix $0<a<b<1$, and initialize $m(0)=1$. We update $m(k)$ according to
\begin{align*}
    m(k+1)=
    \begin{cases}
        0 & \text{if } [\hat\bfx(k+1)]_2\ge b,\\
        1 & \text{if } [\hat\bfx(k+1)]_2\le a,\\
        m(k) & \text{otherwise}.
    \end{cases}
\end{align*}
We then define the tie-breaking rule for Player~1 by
\begin{align*}
    \tau_x(m)=
    \begin{cases}
        1 & \text{if } m=0,\\
        2 & \text{if } m=1.
    \end{cases}
\end{align*}
It is straightforward to verify that, under this scheme, $[\hat\bfx(k)]_2$ oscillates between $a$ and $b$ rather than converging to a single point.

Interestingly, even without any auxiliary one-bit memory, oscillatory behavior can still arise. Consider the tie-breaking rule $\tau_x:S_2\to [2]$ defined by
\begin{align}\label{eq:no_memory_oscillate}
    \tau_x(\bfx)
    \;=\;
    \begin{cases}
        2 & \text{if } [\bfx]_2 = 0,\\
        1 & \text{if } [\bfx]_2 = 1,\\
        2 & \text{if } [\bfx]_2 = \frac{p}{q}\text{ with }(p,q)=1\text{ and }p\text{ odd},\\
        1 & \text{if } [\bfx]_2 = \frac{p}{q}\text{ with }(p,q)=1\text{ and }p\text{ even}.
    \end{cases}
\end{align}

\begin{proposition}\label{prop:oscillation}
Starting from $\hat\bfx(0)=\mathbf{0}$, the FP trajectory for the non-strategic game with payoff matrix~\eqref{eq:zz_mat} and tie-breaking rule~\eqref{eq:no_memory_oscillate} visits the points $p_1=(1/2,1/2)^\top$ and $p_2=(1/3,2/3)^\top$ infinitely often. More precisely, for every $n\ge 1$,
\begin{align}
    [\hat\bfx(2^n)]_2 &= \frac{1}{2},\label{eq:odd_case}\\
    [\hat\bfx(2^n+2^{n-1})]_2 &= \frac{2}{3}.\label{eq:even_case}
\end{align}
In particular, the trajectory does not converge.
\end{proposition}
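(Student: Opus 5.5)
The proof is a direct induction that tracks the number of times Action~2 has been played. Since $\tilde A$ is the zero matrix, $\brx(\hat\bfy(k))=\{1,2\}$ for every $k$, so Player~1's choice is always dictated by the tie-breaking rule~\eqref{eq:no_memory_oscillate}. With $k_1=0$ and $\hat\bfx(0)=\mathbf{0}$, for $k\ge 1$ we have $[\hat\bfx(k)]_2=c(k)/k$, where $c(k)$ is the number of $2$'s among $p_0,\dots,p_{k-1}$. The first step is to restate the rule~\eqref{eq:no_memory_oscillate} in terms of $2$-adic valuations $v_2$. For $0<c<k$, write $c/k=p/q$ in lowest terms. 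Then $p$ is odd exactly when $v_2(c)\le v_2(k)$, and $p$ is even exactly when $v_2(c)>v_2(k)$. So Action~2 is chosen iff $v_2(c(k))\le v_2(k)$.

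Next I would compute the first steps by hand to get the base case. At $k=0$ we have $[\hat\bfx(0)]_2=0$, so $p_0=2$. At $k=1$, $[\hat\bfx(1)]_2=1$, so $p_1=1$. At $k=2$, $c=1$ and the fraction is $1/2$ with odd numerator, so $p_2=2$. At $k=3$, $c=2$ and the fraction is $2/3$ with even numerator, so $p_3=1$. This gives $c(2)=1$ and $c(3)=2$, which verifies \eqref{eq:odd_case} and \eqref{eq:even_case} for $n=1$, and also $c(4)=2$.

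The inductive hypothesis is $c(2^n)=2^{n-1}$. The inductive step has two phases.

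\emph{Phase 1.} For $k=2^n+j$ with $0\le j<2^{n-1}$, suppose $c(k)=2^{n-1}+j$.
\begin{itemize}
\item If $j=0$, then $v_2(c)=n-1<n=v_2(k)$.
\item If $j>0$, then $v_2(j)<n-1$, so $v_2(c)=v_2(j)=v_2(k)$.
\end{itemize}
In both cases Action~2 is played. By induction on $j$, Action~2 is played throughout the phase, and at its end
\[
c(2^n+2^{n-1})=2^n,\qquad \frac{2^n}{3\cdot 2^{n-1}}=\frac{2}{3},
\]
which proves \eqref{eq:even_case}.

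\emph{Phase 2.} For $k=3\cdot 2^{n-1}+i$ with $0\le i<2^{n-1}$, suppose $c(k)=2^n$.
\begin{itemize}
\item If $i=0$, then $v_2(k)=n-1$.
\item If $i>0$, then $v_2(k)=v_2(i)<n-1$.
\end{itemize}
In either case $v_2(k)<n=v_2(c)$, so the reduced numerator is even and Action~1 is played. The count $c$ therefore stays at $2^n$, and at $k=2^{n+1}$ we get $c(2^{n+1})=2^n$. This proves \eqref{eq:odd_case} for $n+1$ and closes the induction.

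Non-convergence follows immediately: $[\hat\bfx(k)]_2$ equals $1/2$ and $2/3$ along two subsequences that both tend to infinity.

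I expect the main obstacle to be the valuation bookkeeping in the two phases. In particular, one has to check that no intermediate fraction falls into the boundary cases $0$ or $1$ of the rule. This holds for $k\ge 2$, since then $0<c(k)<k$. One also has to handle the edge cases $j=0$ and $i=0$ separately, as done above.
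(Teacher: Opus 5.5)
Your proof is correct and follows the paper's argument. Both run an induction on $n$: starting from $[\hat\bfx(2^n)]_2=1/2$, Action~2 is played for $2^{n-1}$ consecutive steps, reaching $2/3$, and then Action~1 for $2^{n-1}$ steps, returning to $1/2$ at time $2^{n+1}$; the parity of each reduced numerator is checked via powers of $2$. Your $2$-adic criterion ($v_2(c)\le v_2(k)$) and your separate handling of $j=0$ and $i=0$ make explicit the range check that $2^{n-1}+j<2^n$, which the paper's step ``every power of $2$ dividing $2^{n-1}+k$ also divides $2^n+k$'' uses without stating.
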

The proof proceeds by induction on $n$ and is deferred to Appendix~\ref{app:proof_oscillation}.

The examples in this section show that tie-breaking is not a mere technical detail, but can fundamentally affect the pointwise behavior of FP. This naturally raises a broader question: to what extent can tie-breaking control pointwise convergence? Can an appropriate tie-breaking rule enforce convergence, or are there games in which non-convergence is unavoidable regardless of how ties are resolved? The rest of the paper addresses this question by showing that there exist examples for which tie-breaking cannot rescue pointwise convergence.

\section{Positive Measure NE Sets and How to Find Them}\label{sec:motivation}
In the previous section, we examined non-convergence in a game with a trivial NE set, namely, one in which every strategy profile is a NE. A natural next question is whether a zero-sum game can have a non-trivial and non-singleton NE set and, if so, how such games can be constructed. It is well known that if the entries of the payoff matrix $A$ are sampled from a continuous distribution, then the resulting zero-sum game admits a unique NE~\citep{van1991stability,daskalakis2019last}. In many applications, however, the entries of a normal-form game come from a discrete or otherwise structured set, such as the integers $\mathbb{Z}$, which leaves open the possibility of multiple equilibria. In this section, we present an idea to construct zero-sum games with non-trivial NE sets of positive measure.

\subsection{Study of 2-Action Games}
We begin with the simplest case in which Player~1 has only two actions. To characterize Player~1's equilibrium strategies, we study problem~\eqref{eq:lp_duality_x}. Let $A \in \RR^{2 \times m}$ and write $\bfx = (p,1-p)^\top \in S_2$. Then problem~\eqref{eq:lp_duality_x} becomes
\begin{align*}
    \max_{\bfx \in S_2}\;\min_{i \in [m]} \bfc_i^\top \bfx
    =
    \max_{p \in [0,1]}\;\min_{i \in [m]} \bigl\{a_{1i}p + a_{2i}(1-p)\bigr\}.
\end{align*}
Graphically, this corresponds to plotting the lower envelope of the affine functions $a_{1i}p + a_{2i}(1-p)$ over $p \in [0,1]$ and identifying the maximizer.

For example, consider the game
\begin{align}\label{eq:2by2_basic}
    \tilde A = \begin{pmatrix}
        1 & 0 \\ 0 & 1
    \end{pmatrix}.
\end{align}
Figure~\ref{fig:2by2_ex_basic} illustrates the graph of the max-min problem, from which we obtain $\tildenex = \{(1/2,1/2)^\top\}$. To obtain multiple equilibria in this example, we would like to add another affine function to the plot that is constant in $p$ and lies below the value $1/2$, so that it becomes the binding part of the lower envelope. Equivalently, we add a column $(\tilde a_{10},\tilde a_{20})^\top$ such that $\tilde a_{10}p + \tilde a_{20}(1-p)$ is constant and strictly less than $1/2$. That constant then becomes the value of the game. For instance, adding $(1/4,1/4)^\top$ yields
\begin{align}\label{eq:2by2_mult_ne}
    \tilde A = \begin{pmatrix}
         1 & 0 & 1/4 \\ 0 & 1 & 1/4
    \end{pmatrix}.
\end{align}
Its max-min graph is shown in Figure~\ref{fig:2by2_ex_multi_ne}, and it yields the nontrivial equilibrium set
\begin{align*}
    \tildenex = \{\theta \bfe_1 + (1-\theta)\bfe_2 : \theta \in [1/4,3/4]\}.
\end{align*}

\begin{figure}
  \centering
  \begin{subfigure}[t]{0.48\linewidth}
    \centering
    \begin{tikzpicture}
      \begin{axis}[maxmin axis,
          ylabel={payoff},
          xtick={0,0.5,1}, xticklabels={$0$,$\tfrac12$,$1$},
          ytick={0,0.25,0.5}, yticklabels={$0$,$\tfrac14$,$\tfrac12$}]
        \addplot[cbblue, semithick, domain=0:1, samples=2] {x};
        \addplot[cbverm, semithick, domain=0:1, samples=2] {1-x};
        \addplot[black, very thick] coordinates {(0,0) (0.5,0.5) (1,0)};
        \addplot[black, dashed, thin] coordinates {(0.5,0) (0.5,0.5)};
        \addplot[only marks, mark=*, mark size=1.5pt] coordinates {(0.5,0.5)};
        \node[font=\footnotesize, anchor=south east, cbverm] at (axis cs:0.42,0.55) {$\bfc_2^\top\bfx$};
        \node[font=\footnotesize, anchor=south west, cbblue] at (axis cs:0.58,0.55) {$\bfc_1^\top\bfx$};
        \node[font=\footnotesize, anchor=west] at (axis cs:0.54,0.47) {$v=\tfrac12$};
      \end{axis}
    \end{tikzpicture}
    \caption{The game in \eqref{eq:2by2_basic}.}
    \label{fig:2by2_ex_basic}
  \end{subfigure}\hfill
  \begin{subfigure}[t]{0.48\linewidth}
    \centering
    \begin{tikzpicture}
      \begin{axis}[maxmin axis,
          xtick={0,0.25,0.75,1}, xticklabels={$0$,$\tfrac14$,$\tfrac34$,$1$},
          ytick={0,0.25,0.5}, yticklabels={$0$,$\tfrac14$,$\tfrac12$}]
        \addplot[cbblue, semithick, domain=0:1, samples=2] {x};
        \addplot[cbverm, semithick, domain=0:1, samples=2] {1-x};
        \addplot[cbgreen, semithick, domain=0:1, samples=2] {0.25};
        \addplot[black, very thick] coordinates {(0,0) (0.25,0.25) (0.75,0.25) (1,0)};
        \addplot[black, dashed, thin] coordinates {(0.25,0) (0.25,0.25)};
        \addplot[black, dashed, thin] coordinates {(0.75,0) (0.75,0.25)};
        \addplot[cbpurple, line width=2.2pt] coordinates {(0.25,0.006) (0.75,0.006)};
        \node[font=\footnotesize, anchor=south east, cbverm] at (axis cs:0.42,0.55) {$\bfc_2^\top\bfx$};
        \node[font=\footnotesize, anchor=south west, cbblue] at (axis cs:0.58,0.55) {$\bfc_1^\top\bfx$};
        \node[font=\footnotesize, anchor=south, cbgreen] at (axis cs:0.5,0.26) {$\bfc_3^\top\bfx$};
        \node[font=\footnotesize, anchor=south, cbpurple] at (axis cs:0.5,0.03) {$\tildenex$};
      \end{axis}
    \end{tikzpicture}
    \caption{The game in \eqref{eq:2by2_mult_ne}.}
    \label{fig:2by2_ex_multi_ne}
  \end{subfigure}
  \caption{Max--min graphs. In each panel, the thin colored lines are the column payoffs $\bfc_i^\top\bfx$ with $\bfx=(p,1-p)^\top$, and the thick black line is the lower envelope $\min_{i\in[m]}\bfc_i^\top\bfx$; its maximizers form Player~1's equilibrium set. (a)~For the game in \eqref{eq:2by2_basic}, the maximizer is unique and the value is $1/2$. (b)~For the game in \eqref{eq:2by2_mult_ne}, the value is $1/4$ and every $p\in[1/4,3/4]$ is optimal, so the equilibrium set $\tildenex=\{\theta\bfe_1+(1-\theta)\bfe_2 : \theta\in[1/4,3/4]\}$ (purple segment) has positive measure.}
  \label{fig:maxmin_pair}
\end{figure}

Generally, the above construction can be summarized as follows. Let $v$ be the value of a zero-sum game with payoff matrix $A \in \RR^{n\times m}$. By inserting a column of the form $v'\onevector_n$ with $v' < v$ into $A$, we can produce a game whose equilibrium set is non-singleton. In fact, this construction yields an equilibrium set of positive measure. On the other hand, by adding a suitably chosen column, one can also obtain an equilibrium set with multiple points but measure zero; cf. the game $\tilde A$ in Equation~\eqref{eq:converging_example} of Section~\ref{sec:discussion}.

\subsection{Positive-Measure NE Sets May Not Guarantee Pointwise Convergence}
Having constructed a game with a positive-measure NE set, it is natural to apply FP to this game and investigate its behavior. Running FP for the game $\tilde A$ in \eqref{eq:2by2_mult_ne} with no prior yields the $10^4$-step trajectory shown in Figure~\ref{fig:2by2_fp}. Whenever multiple best responses arise, we choose one of them uniformly at random. As the figure illustrates, the FP trajectory does not converge to a single point.

More importantly, this non-convergence is not an artifact of the particular tie-breaking rule used in the simulation. In fact, one can show directly that FP for the game \eqref{eq:2by2_mult_ne} cannot converge to a single point under any tie-breaking rule. Equivalently, as long as each player continues to select actions from the corresponding best-response set, the FP trajectory fails to stabilize at any fixed strategy profile. 

\begin{figure}
  \centering
  \begin{subfigure}[b]{0.46\linewidth}
    \centering
    \includegraphics[width=\linewidth]{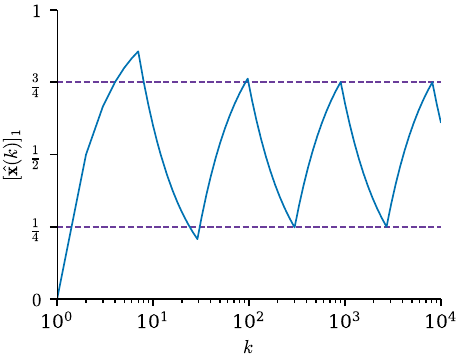}
    \caption{The $2\times3$ game \eqref{eq:2by2_mult_ne}.}
    \label{fig:2by2_fp}
  \end{subfigure}\hfill
  \begin{subfigure}[b]{0.52\linewidth}
    \centering
    \includegraphics[width=\linewidth]{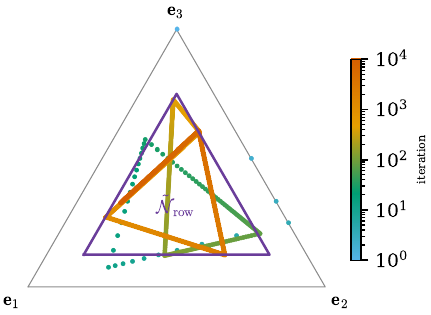}
    \caption{The $3\times4$ game \eqref{eq:non_converge_example}.}
    \label{fig:fp_example}
  \end{subfigure}
  \caption{Sample FP trajectories, starting from no prior and breaking ties uniformly at random among multiple best responses. (a)~The first coordinate $[\hat\bfx(k)]_1$ for the game in \eqref{eq:2by2_mult_ne} exhibits persistent oscillations between $1/4$ and $3/4$ (dashed lines), the extreme points of the equilibrium set; the horizontal axis is logarithmic in $k$. (b)~The empirical strategy $\hat\bfx(k)$ for the game in \eqref{eq:non_converge_example}, plotted in the simplex $S_3$ with color indicating the iteration. The purple triangle is the equilibrium set $\tildenex$ of \eqref{eq:non_converge_example_ne}: the trajectory approaches $\tildenex$ but keeps wandering inside it rather than settling at a point.}
  \label{fig:fp_traj_pair}
\end{figure}

The same construction extends naturally to the $3\times 4$ game
\begin{align}
\tilde A =
\begin{pmatrix}
    1/8 & 1 & 0 & 0 \\
    1/8 & 0 & 1 & 0 \\
    1/8 & 0 & 0 & 1
\end{pmatrix}.
\label{eq:non_converge_example}
\end{align}
The equilibrium sets of this game are given by
\begin{align}
\tildenex = \conv\bigl(\{\nae_1,\nae_2,\nae_3\}\bigr),
\qquad
\tildeney = \{\bfe_1\},
\label{eq:non_converge_example_ne}
\end{align}
where $\nae_1,\nae_2,\nae_3$ are the extreme points of $\tildenex$ given by
\begin{align*}
    \nae_1 &=
    \begin{pmatrix}
        \frac{3}{4}\\[0.1cm]
        \frac{1}{8}\\[0.1cm]
        \frac{1}{8}
    \end{pmatrix},
    \qquad
    \nae_2 =
    \begin{pmatrix}
        \frac{1}{8}\\[0.1cm]
        \frac{3}{4}\\[0.1cm]
        \frac{1}{8}
    \end{pmatrix},
    \qquad
    \nae_3 =
    \begin{pmatrix}
        \frac{1}{8}\\[0.1cm]
        \frac{1}{8}\\[0.1cm]
        \frac{3}{4}
    \end{pmatrix}.
\end{align*}

Figure~\ref{fig:fp_example} illustrates the equilibrium set together with a sample trajectory of Player~1's empirical strategy under FP for the game $\tilde A$ defined in \eqref{eq:non_converge_example}. The process starts from no prior, and whenever multiple best responses arise, one of them is selected uniformly at random. Over $10^4$ iterations, the trajectory does not appear to converge to a single point in $\tildenex$. The $3$-action case already captures many of the geometric features that extend to general games $A \in \RR^{n\times m}$ and that play an essential role in our later non-convergence analysis. Throughout the paper, we use $3$-action games primarily as a convenient way to visualize these phenomena.

The key informal observation is that, once the FP trajectory enters the NE set, the dynamics retain a certain \textit{inertia} that tends to push the trajectory through the set rather than letting it settle at a single equilibrium. When the trajectory lies outside the NE set, however, the usual convergence of FP to the NE set pulls it back toward the set. To make this mechanism effective, we would like the NE set to have positive measure (Assumption~\ref{asspt:positive_measure}), so that the trajectory can spend a nontrivial amount of time inside it. Moreover, to rule out convergence to the boundary $\partial S_n$, we would like the NE set to be disjoint from $\partial S_n$ (Assumption~\ref{asspt:boundary}), that is, all equilibria are fully mixed. Under this condition, convergence to the NE set automatically rules out convergence to $\partial S_n$.

\section{Fictitious Play Cannot Converge to a Point}\label{sec:main_result}
The examples in the previous section show that there exist zero-sum games for which FP fails to converge to a point under any tie-breaking rule. A natural next step is to formulate this phenomenon in a general theorem. Our first assumption is a normalization of the game.

\setcounter{assumption}{-1}
\begin{assumption}\label{asspt:no_dominated}
For a zero-sum game with payoff matrix $A \in \RR^{n \times m}$, neither player has a strictly dominated action.
\end{assumption}

Assumption~\ref{asspt:no_dominated} is natural in view of Assumption~\ref{asspt:boundary} below: if Player~1 had a strictly dominated action, every equilibrium strategy of Player~1 would assign it zero probability, so no equilibrium strategy of Player~1 could be fully mixed. It also guarantees that every action of each player is a best response to some mixed strategy of the opponent, since in a finite game any pure action that is never a best response is strictly dominated by some mixed action~\citep{pearce1984rationalizable}.

Assumption~\ref{asspt:no_dominated} is, moreover, essentially without loss of generality. Given an arbitrary zero-sum game, remove strictly dominated actions, iteratively, until neither player has one. FP is consistent with adaptive learning in the sense of \citet{milgrom1991adaptive}; by their Lemma~4 and Theorem~5, there exists a finite time after which neither player plays any removed action.\footnote{Theorem~5 of \citet{milgrom1991adaptive} is stated for the sets $U^{\epsilon k}(S)$ surviving $k$ rounds of elimination of $\epsilon$-dominated strategies. In a finite game, the iterated elimination terminates in finitely many rounds, and each round's domination gaps are bounded away from zero by compactness; hence, for $\epsilon$ small enough and $k$ large enough, $U^{\epsilon k}(S)$ coincides with the set of actions surviving the exact iterated elimination of strictly dominated actions.} Hence the empirical strategies of the original game, projected onto the surviving actions, eventually evolve as an FP process of the reduced game, in which the finitely many plays of removed actions contribute only a fixed correction to the cumulative payoffs and can be absorbed into the initial conditions. In this sense, non-convergence of FP in the reduced game extends to the original game; Remark~\ref{rem:reduction} below makes this transfer precise.

The following two assumptions isolate the key features that were critical in the examples of Section~\ref{sec:motivation}.

\begin{assumption}\label{asspt:positive_measure}
For a zero-sum game with payoff matrix $A \in \RR^{n \times m}$, the equilibrium set of Player~1 has positive measure, i.e., $\mu(\nex)>0$.
\end{assumption}

\begin{assumption}\label{asspt:boundary}
For a zero-sum game with payoff matrix $A \in \RR^{n \times m}$, every equilibrium strategy of Player~1 is fully mixed, i.e., $\partial S_n \cap \nex=\emptyset$.
\end{assumption}

Our main result shows that these three assumptions are sufficient to rule out pointwise convergence of FP in zero-sum games.

\begin{theorem}\label{thm:non_convergence}
Suppose a zero-sum game with payoff matrix $A \in \RR^{n \times m}$ satisfies Assumptions~\ref{asspt:no_dominated},~\ref{asspt:positive_measure}, and~\ref{asspt:boundary}. If the initial empirical strategy of Player~1 lies outside the equilibrium set, i.e., $\hat\bfx(0)\notin\nex$ (in particular, if Player~1 starts with no prior information, i.e., $k_1=0$, so that $\hat\bfx(0)=\zerovector_n$), then under any tie-breaking rule, the Player~1 empirical strategy under FP cannot converge to any point in $\nex$.
\end{theorem}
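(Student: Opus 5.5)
We argue by contradiction: suppose $\hat\bfx(k)\to\bfx^*$ for some $\bfx^*\in\nex$. The proof splits according to whether $\bfx^*\in\intr(\nex)$ or $\bfx^*\in\partial\nex$; Assumption~\ref{asspt:positive_measure} makes the interior nonempty and relatively open in $\operatorname{aff}(S_n)$. First we record a structural fact. For $\bfx\in\nex$, every column $j$ with $[\bfy^*]_j>0$ for some $\bfy^*\in\ney$ satisfies $\bfc_j^\top\bfx=v$ by complementary slackness. Since $\nex$ has nonempty interior, $\bfc_j^\top\bfx$ is constant on an open set, so $\bfc_j=v\onevector_n$ for every such $j$. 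By Assumption~\ref{asspt:boundary} every $\bfx^*\in\nex$ is fully mixed, so complementary slackness on Player~1's side forces $\bfr_i^\top\bfy^*=v$ for all $i$ and all $\bfy^*\in\ney$. From this we will deduce that $\ney$ is a singleton $\{\bfe_{j_0}\}$ with $\bfc_{j_0}=v\onevector_n$. Here we use Assumption~\ref{asspt:no_dominated}: a second constant column would be a duplicate or dominated, and a mixture of constant columns is handled the same way. This matches the example \eqref{eq:non_converge_example_ne}.

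\emph{Interior case (inertia).} Suppose $\bfx^*\in\intr(\nex)$. For $\bfx$ in a neighborhood $U\subset\intr(\nex)$, the column $j_0$ gives payoff exactly $v$ while every other column gives strictly more. Otherwise a non-constant column would attain the minimum $v$ on an open set, which is impossible. Hence $\bry(\hat\bfx(k))=\{j_0\}$ whenever $\hat\bfx(k)\in U$, i.e.\ Player~2 repeats $j_0$. Since $\bfc_{j_0}$ is constant, it adds the same amount to every component of the cumulative payoff vector $(k+k_2)A\hat\bfy(k)$, so Player~1's argmax set is frozen. The claim is that Player~1 then keeps choosing from a fixed set. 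In fact, if $p_k=i$ once, then $i$ remains a best response forever while in $U$, and the tie-breaking rule may keep selecting it. So Player~1 may play a single action $i$ persistently, driving $\hat\bfx(k)$ toward $\bfe_i\notin\nex$; this contradicts staying in $U$.

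The difficulty is that tie-breaking could alternate among tied actions and converge to $\bfx^*$. This is where $\hat\bfx(0)\notin\nex$ is used. I would show that the set of Player~1 actions tied at the time of entry into $U$ cannot have its empirical mixture equal $\bfx^*$. Indeed, once the trajectory stays in $U$ forever, cumulative payoffs are frozen up to a constant shift, so $\brx$ is a fixed set $B$. Then $\hat\bfx(k)\to\bfx^*$ requires $\operatorname{supp}(\bfx^*)\subseteq B$, so $B=[n]$, meaning all rows of $A\hat\bfy(K)$ are tied at the entry time. I then plan to use the Robinson-type bound that $\max_i\bfr_i^\top\hat\bfy(k)-v\to 0$ together with $\hat\bfy(k)\to\bfe_{j_0}$. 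The main obstacle is ruling out exact all-way ties persisting; I would handle it by merging this case into the boundary counting argument below, since exact equality of all cumulative row payoffs is a codimension-$(n-1)$ condition that each non-$j_0$ play of Player~2 breaks.

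\emph{Boundary case (counting).} Suppose $\bfx^*\in\partial\nex$. Then in every neighborhood, Player~2 plays columns $j\neq j_0$ infinitely often; otherwise we are back in the frozen regime and get the same contradiction. The key lemma to prove is that no convex combination of non-$j_0$ columns that Player~2 can best-respond with near $\bfx^*$ is a multiple of $\onevector_n$, i.e.\ these deviations never reward all rows equally. This follows because such a column would be constant, hence weakly dominating or duplicate, contrary to Assumption~\ref{asspt:no_dominated} or the uniqueness of $\ney$. Then the cumulative payoff differences $(k+k_2)(\bfr_i-\bfr_{i'})^\top\hat\bfy(k)$ grow linearly in the number $N_k$ of deviations, in a direction lying in a fixed cone that does not contain the zero vector, by compactness. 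Meanwhile, $\hat\bfx(k)$ staying near the fully mixed $\bfx^*$ requires every action to be played with positive frequency, so every row must repeatedly be the argmax. The counting argument to carry out is this: between consecutive plays of the lagging action, its deficit grows proportionally to the number of deviations. If $N_k\to\infty$, this forces some action's frequency to $0$, contradicting full mixedness. If $N_k$ stays bounded, we return to the interior-type frozen argument. I expect this boundary counting, controlling deviations against frequencies uniformly, to be the main obstacle.
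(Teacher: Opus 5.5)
Your overall architecture matches the paper's: a constant column $\bfc_{j_0}=v\onevector_n$ with $\ney=\{\bfe_{j_0}\}$, frozen best responses in the interior, and a count at the boundary driven by non-constant deviations. Your complementary-slackness derivation of the structural fact is correct, and more direct than the paper's route through $v_{-1}>v$.

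The genuine gap is the step you flag and then defer. You must show that $X_1\times\bigcap_{i\in[n]}Y_i$ is never visited, i.e., that the trajectory is never in $X_1=\nex$ at a time when all of Player~1's cumulative payoffs are exactly tied. Your frozen-argmax reasoning reduces the interior case to exactly this. Your boundary case needs it too, because you get $N_k\to\infty$ only by appeal to ``the same contradiction.'' This is the one place where $\hat\bfx(0)\notin\nex$ must be used, and it cannot be dispensed with. Take $k_2>0$, $\hat\bfy(0)=\bfe_{j_0}$, and $\hat\bfx(0)\in\intr(\nex)$ with $k_1$ large. Then all rows stay tied forever and Player~2 always plays $j_0$, so a tie-breaking rule that tracks $\hat\bfx(0)$ gives $\hat\bfx(k)\to\hat\bfx(0)$.

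Neither of your proposed fixes closes the gap. The codimension heuristic fails because cumulative payoffs are integer combinations of columns, so ties broken by deviations can be restored exactly. In \eqref{eq:non_converge_example}, each play of columns $2$, $3$, $4$ breaks a tie, yet one play of each adds $\onevector_3$. Merging into the boundary count is also impossible, since the frozen interior regime has no deviations to count.

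The missing idea is a backward induction at the first time $k$ at which $\hat\bfx(k)\in X_1$ and all rows tie (Proposition~\ref{prop:interior_instability}).
\begin{itemize}
\item If Player~2 played $j_0$ at time $k-1$, inertia shows the tie and $\hat\bfx(k-1)\in X_1$ already held at $k-1$. This contradicts minimality or the initial condition.
\item If Player~2 played $i\neq j_0$, the tie at time $k$ forces Player~1's action $j'$ at $k-1$ to lie in $\argmin_j a_{ji}$. You then need a lemma your proposal lacks: every non-constant column has an entry strictly below $v$. It follows from Assumptions~\ref{asspt:no_dominated} and~\ref{asspt:boundary}, since otherwise the column is never a best response and hence strictly dominated. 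This gives $a_{j'i}<v$. Combined with $\bfc_i^\top\hat\bfx(k-1)\le v$, it yields $\bfc_i^\top\hat\bfx(k)<v$, contradicting $\hat\bfx(k)\in X_1$.
\end{itemize}

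Two smaller points on the boundary case.
\begin{itemize}
\item Justify non-constancy by evaluating a constant mixture $c\onevector_n$ of binding columns at $\bfx^*$. This gives $c=v$, so the mixture would be a second element of $\ney$. Domination is not the reason, and restricting to binding columns is essential.
\item The count does not work by driving a frequency to $0$. Pointwise convergence forces every action to be played in every window $[T,\lceil(1+\beta)T\rceil)$. The most lagging action needs $\Omega(\mathcal{D}(T))$ further deviations to climb back. So $\mathcal{D}$ would have to grow by a factor $1+d$ per window while $T$ grows only by $1+d/2$, which contradicts $\mathcal{D}(T)\le T$.
\end{itemize}
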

Since FP converges to the NE set~\citep{Robinson} and $\nex$ is closed, every limit point of $\{\hat\bfx(k)\}$ lies in $\nex$. Combined with Theorem~\ref{thm:non_convergence}, this implies that $\{\hat\bfx(k)\}$ does not converge at all. The payoff matrices in Equations~\eqref{eq:2by2_mult_ne} and~\eqref{eq:non_converge_example} satisfy Assumptions~\ref{asspt:no_dominated}--\ref{asspt:boundary}; see Equation~\eqref{eq:non_converge_example_ne} and Figure~\ref{fig:fp_example}. Theorem~\ref{thm:non_convergence} therefore shows that the persistent oscillations observed in Section~\ref{sec:motivation} are not artifacts of the uniform tie-breaking rule used in the simulations: no tie-breaking rule can enforce pointwise convergence in these games.

\begin{remark}\label{rem:reduction}
For a general zero-sum game that does not satisfy Assumption~\ref{asspt:no_dominated}, Theorem~\ref{thm:non_convergence} applies through the above reduction. Suppose the reduced game satisfies Assumptions~\ref{asspt:positive_measure} and~\ref{asspt:boundary}, and let $T$ be a time after which neither player plays any removed action. If the empirical strategy of Player~1 at time $T$, projected onto the surviving actions, lies outside the equilibrium set of the reduced game, then the FP process from time $T$ onward satisfies the hypothesis of Theorem~\ref{thm:non_convergence}, and hence the FP trajectory of the original game cannot converge to a point.
\end{remark}

\section{Proof of Theorem~\ref{thm:non_convergence}}\label{sec:proof}
In this section, we prove Theorem~\ref{thm:non_convergence}. Hereafter, unless otherwise stated, $A \in \RR^{n \times m}$ denotes the payoff matrix of a zero-sum game. Recall that we use $v$, $\nex$, and $\ney$ to denote the value of the game, Player~1's equilibrium set, and Player~2's equilibrium set, respectively. We also assume throughout this section that the initial empirical strategy of Player~1 lies outside its equilibrium set, i.e., $\hat\bfx(0)\notin\nex$; note that this covers the no-prior case, since $k_1=0$ gives $\hat\bfx(0)=\zerovector_n\notin\nex$. Proofs omitted from this section are collected in Appendix~\ref{app:proof_oscillation}.

\subsection{Structural Properties of the Game of Interest}

Before proving non-convergence, we establish several structural properties of zero-sum games satisfying Assumptions~\ref{asspt:no_dominated}--\ref{asspt:boundary}. 

\begin{proposition}\label{prop:existence_of_one_vector}
Under Assumption~\ref{asspt:positive_measure}, there exists a column index $i \in [m]$ such that $\bfc_i = v \onevector_n$, where $v$ is the value of the game.
\end{proposition}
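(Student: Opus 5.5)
Pick a point $\bfx^*$ in the relative interior of $\nex$. Assumption~\ref{asspt:positive_measure} makes $\nex$ a convex set of full dimension $n-1$ in $\operatorname{aff}(S_n)$, so such a point exists, and there is an $\epsilon>0$ with $\bfx^*+\bfd\in\nex$ for every direction $\bfd$ satisfying $\onevector_n^\top\bfd=0$ and $\norm{\bfd}_2\le\epsilon$.

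Next I would use complementary slackness to pick out a binding column. Take $\bfy^*\in\ney$. Since ${\bfx^*}^\top A\bfy^*=v$, and $\bfc_j^\top\bfx^*\ge v$ for every $j$ because $\bfx^*$ solves \eqref{eq:lp_duality_x}, every column $j$ in the support of $\bfy^*$ must satisfy $\bfc_j^\top\bfx^*=v$. So the set $J(\bfx^*)=\{j:\bfc_j^\top\bfx^*=v\}$ is nonempty.

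Now fix $j\in J(\bfx^*)$ and vary in admissible directions. For every $\bfd$ as above, $\bfx^*\pm\bfd\in\nex$. Hence $\bfc_j^\top(\bfx^*\pm\bfd)\ge v$, which gives $\pm\bfc_j^\top\bfd\ge0$, so $\bfc_j^\top\bfd=0$. Thus $\bfc_j$ is orthogonal to the whole subspace $\{\bfd:\onevector_n^\top\bfd=0\}$, and therefore $\bfc_j=\lambda\onevector_n$ for some scalar $\lambda$. Evaluating at $\bfx^*\in S_n$ gives $\lambda=\bfc_j^\top\bfx^*=v$. So $\bfc_j=v\onevector_n$, as claimed.

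The only delicate step is justifying the existence of a relative interior point with a full-dimensional ball around it. This follows because a convex set of positive $(n-1)$-dimensional Lebesgue measure cannot lie in a proper affine subspace of $\operatorname{aff}(S_n)$, and so it has nonempty interior there.
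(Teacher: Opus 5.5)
Your proof is correct and follows essentially the same route as the paper. Both arguments take a relative interior point $\bfx^*$ of $\nex$, pick a column binding there, and perturb by $\pm\mathbf{d}$ with $\mathbf{d}\in\nullspace(\onevector_n^\top)$ to force that column to be orthogonal to $\nullspace(\onevector_n^\top)$, hence equal to $v\onevector_n$. The only cosmetic difference is that you find the binding column via complementary slackness with some element of $\ney$, whereas the paper simply notes that $\min_{j}\bfc_j^\top\bfx^*=v$ is attained by some column, which is slightly more direct.
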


\begin{pf}
Since $\nex$ is convex and $\mu(\nex)>0$, its relative interior in $S_n$ is nonempty~\citep{boundary_measure_zero}. Hence there exist $\epsilon>0$ and $\bfx^*\in \nex$ such that
$B_\epsilon(\bfx^*) \cap S_n \subseteq \nex$.
Because $\bfx^* \in \nex$, there exists $i \in [m]$ such that
$\bfc_i^\top \bfx^* = v$.

Now let $\bfw \in \nullspace(\onevector_n^\top)$ satisfy $\norm{\bfw}_2=1$. Since $\bfx^* + \epsilon \bfw/2 \in \nex$, we have
\begin{align*}
\bfc_i^\top \left(\bfx^* + \frac{\epsilon}{2}\bfw\right)\ge v,
\end{align*}
because
\begin{align*}
\min_{j\in[m]} \bfc_j^\top \left(\bfx^* + \frac{\epsilon}{2}\bfw\right)=v.
\end{align*}
Using $\bfc_i^\top \bfx^*=v$, this implies $\bfc_i^\top \bfw \ge 0$. Replacing $\bfw$ by $-\bfw$, we also obtain $\bfc_i^\top \bfw \le 0$. Hence
\begin{align*}
\bfc_i^\top \bfw = 0
\qquad \text{for all } \bfw \in \nullspace(\onevector_n^\top).
\end{align*}
Therefore $\bfc_i \in \text{span}\{\onevector_n\}$, so there exists $c\in \RR$ such that
$\bfc_i = c\,\onevector_n$. Finally,
\begin{align*}
v=\bfc_i^\top \bfx^* = c\,\onevector_n^\top \bfx^* = c,
\end{align*}
since $\bfx^* \in S_n$. Thus $\bfc_i = v\onevector_n$.
\end{pf}
Henceforth, for any zero-sum game satisfying Assumption~\ref{asspt:positive_measure}, we may relabel the columns so that $\bfc_1 = v \onevector_n$.

Given that $\bfc_1=v\onevector_n$, we now show that, without loss of generality, no other column of $A$ is parallel to $\onevector_n$. Suppose $\bfc_j=v'\onevector_n$ for some $j\in[m]$ with $j\neq 1$. If $v'>v$, then column $j$ is strictly dominated for Player~2 by column~1, contradicting Assumption~\ref{asspt:no_dominated}. If $v'<v$, then the value of the game would be at most $v'$, contradicting that it is $v$. If $v'=v$, then column $j$ duplicates column~1, and removing it does not affect Player~1's FP dynamics or the associated non-convergence behavior. Thus, without loss of generality, we may assume that no column other than $\bfc_1$ is parallel to $\onevector_n$. We summarize this reduction in Assumption~\ref{asspt:deduced_from_A1}. Note that this assumption does not affect the validity of the theorem.
\begin{assumption}\label{asspt:deduced_from_A1}
Henceforth, without loss of generality, we restrict attention to zero-sum games $A \in \RR^{n \times m}$ satisfying Assumptions~\ref{asspt:no_dominated} and~\ref{asspt:positive_measure} and the following condition: the first column satisfies $\bfc_1 = v \onevector_n$, where $v$ is the value of the game, and no other column of $A$ belongs to $\text{span}\{\onevector_n\}$.
\end{assumption}

Under Assumption~\ref{asspt:deduced_from_A1}, Proposition~\ref{prop:existence_of_one_vector} implies that $\bfc_i^\top \bfx > v$ for every $i \neq 1$ and every $\bfx \in \intr(\nex)$; otherwise, applying the argument of Proposition~\ref{prop:existence_of_one_vector} at such an interior point would force $\bfc_i = v \onevector_n$, contradicting Assumption~\ref{asspt:deduced_from_A1}. Since $\bfc_1 = v \onevector_n$, it follows that $X_1$ coincides exactly with $\nex$.

To characterize $\ney$, it is useful to compare the original game with the reduced game obtained by removing the first column.
\begin{lemma}\label{lemma:value_game_minus_1}
Under Assumptions~\ref{asspt:no_dominated},~\ref{asspt:positive_measure}, and~\ref{asspt:deduced_from_A1}, let $v_{-1}$ denote the value of the reduced zero-sum game with payoff matrix $A_{-1}$. Then $v_{-1}>v$.
\end{lemma}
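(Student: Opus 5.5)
Removing a column can only help the minimizing player less, so $v_{-1}\ge v$ is immediate: any $\bfx$ satisfies $\min_{i\ge 2}\bfc_i^\top\bfx \ge \min_{i\in[m]}\bfc_i^\top\bfx$, and taking the maximum over $\bfx\in S_n$ gives $v_{-1}\ge v$. The work is in ruling out equality. I will argue by contradiction, assuming $v_{-1}=v$, and use the relative interior of $\nex$ that we know is nonempty under Assumption~\ref{asspt:positive_measure}.

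First, from the discussion following Assumption~\ref{asspt:deduced_from_A1}, every $\bfx\in\intr(\nex)$ satisfies $\bfc_i^\top\bfx>v$ for all $i\neq 1$. Pick such an $\bfx^*$. Then
\begin{align*}
v_{-1}=\max_{\bfx\in S_n}\min_{i\ge 2}\bfc_i^\top\bfx \;\ge\; \min_{i\ge 2}\bfc_i^\top\bfx^* \;>\; v,
\end{align*}
since the minimum is over finitely many indices, each strictly above $v$. This already gives $v_{-1}>v$ directly, so no contradiction setup is really needed. The only point to double-check is the justification of the strict inequality at interior points. It comes from repeating the argument of Proposition~\ref{prop:existence_of_one_vector}: if $\bfc_i^\top\bfx^*=v$ for some $i\neq1$ at a point with a ball $B_\epsilon(\bfx^*)\cap S_n\subseteq\nex$, then $\bfc_i^\top\bfw=0$ for all $\bfw\in\nullspace(\onevector_n^\top)$. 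This forces $\bfc_i\in\operatorname{span}\{\onevector_n\}$, which contradicts Assumption~\ref{asspt:deduced_from_A1}.

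I must also ensure that $\bfx^*$ lies in the relative interior in the sense needed, namely that a full relative neighborhood in $S_n$ is inside $\nex$. The existence of such a point is exactly the first step of the proof of Proposition~\ref{prop:existence_of_one_vector}, using convexity and positive measure. If $n=1$ the simplex is a point and the statement degenerates; I will note that positive measure with $n\ge 2$ is the intended case.

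The main (mild) obstacle is the degenerate case $m=1$, where $A_{-1}$ has no columns and $v_{-1}$ is undefined. Under Assumption~\ref{asspt:no_dominated} combined with the structure here, I would either set $v_{-1}=+\infty$ by convention or observe that the lemma is only invoked when $m\ge2$. Assumption~\ref{asspt:no_dominated} is not otherwise needed in this argument beyond its role in justifying Assumption~\ref{asspt:deduced_from_A1}.
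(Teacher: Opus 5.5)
Your proof is correct, but it takes a more direct route than the paper's.

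The paper argues by contradiction from $v_{-1}\le v$, using a measure dichotomy on the reduced game's equilibrium set $\nex^{(-1)}$:
\begin{itemize}
\item if $\mu(\nex^{(-1)})>0$, Proposition~\ref{prop:existence_of_one_vector} applied to $A_{-1}$ yields a column $\bfc_i=v_{-1}\onevector_n$ with $i\ge 2$, contradicting Assumption~\ref{asspt:deduced_from_A1};
\item if $\mu(\nex^{(-1)})=0$, then $\min_{i\ge2}\bfc_i^\top\bfx<v_{-1}\le v$ for almost every $\bfx\in S_n$, which would make $\nex$ a null set.
\end{itemize}

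You instead exhibit an explicit witness. Any $\bfx^*\in\intr(\nex)$ has $\bfc_i^\top\bfx^*>v$ for every $i\neq 1$; the paper itself records this right after Assumption~\ref{asspt:deduced_from_A1}. Since there are finitely many columns, $v_{-1}\ge\min_{i\ge2}\bfc_i^\top\bfx^*>v$.

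What your route buys:
\begin{itemize}
\item It avoids applying Proposition~\ref{prop:existence_of_one_vector} to a second game and avoids any almost-everywhere reasoning.
\item It makes both the preliminary bound $v_{-1}\ge v$ and the contradiction framing unnecessary, as you noticed.
\item It gives a quantitative gap, $v_{-1}-v\ge\min_{i\ge2}\bfc_i^\top\bfx^*-v$, at any interior equilibrium.
\end{itemize}
Like the paper's argument, yours never invokes Assumption~\ref{asspt:no_dominated}.

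One point of precision: the neighborhood you need is a ball in $\operatorname{aff}(S_n)$ contained in $\nex$, so that $\bfx^*\pm\tfrac{\epsilon}{2}\bfw$ are both equilibria. A relative neighborhood in $S_n$ is not enough. A convex set of positive measure provides exactly such a ball, and this automatically keeps $\bfx^*$ off $\partial S_n$. Your remark on the degenerate case $m=1$ is fair; the paper leaves it implicit.
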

\begin{corollary}\label{cor:unique_NE_player2}
    Under Assumptions~\ref{asspt:no_dominated},~\ref{asspt:positive_measure}, and~\ref{asspt:deduced_from_A1}, we have $\ney=\{\bfe_1\}$, i.e., Player~2 has a unique equilibrium strategy, namely $\bfe_1$.
\end{corollary}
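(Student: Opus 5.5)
We need to show $\ney=\{\bfe_1\}$ using Lemma~\ref{lemma:value_game_minus_1}, i.e., $v_{-1}>v$.

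The plan has two parts: first check that $\bfe_1\in\ney$, then show that every $\bfy^*\in\ney$ puts full weight on column~1.

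\emph{Step 1: $\bfe_1\in\ney$.} By \eqref{eq:lp_duality_y}, it suffices to verify $\max_{j\in[n]}\bfr_j^\top\bfe_1=v$. Since $\bfc_1=v\onevector_n$, we have $\bfr_j^\top\bfe_1=a_{j1}=v$ for every $j$, so the maximum is exactly $v$. Because $v$ is the optimal value of the min-max problem, $\bfe_1$ is a minimizer.

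\emph{Step 2: uniqueness.} Let $\bfy^*\in\ney$ and write $\bfy^*=t\bfe_1+(1-t)\bfz$ with $t\in[0,1]$. When $t<1$, take $\bfz\in S_m$ supported on columns $2,\dots,m$, and identify $\bfz$ with $\bfz'\in S_{m-1}$. Suppose for contradiction that $t<1$.

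Fix any $\bfx^*$ in $\nex$. For every $\bfx\in S_n$,
\begin{align*}
\bfx^\top A\bfy^* = tv+(1-t)\,\bfx^\top A_{-1}\bfz'.
\end{align*}
By the minimax theorem for the reduced game, there is $\bfx'\in S_n$ with $\bfx'^\top A_{-1}\bfz'\ge v_{-1}$ for every $\bfz'\in S_{m-1}$; take $\bfx'$ optimal for Player~1 in $A_{-1}$. Then
\begin{align*}
\max_{\bfx\in S_n}\bfx^\top A\bfy^*\ \ge\ tv+(1-t)v_{-1}\ >\ v,
\end{align*}
where the strict inequality uses $t<1$ and $v_{-1}>v$ from Lemma~\ref{lemma:value_game_minus_1}. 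This contradicts $\max_j\bfr_j^\top\bfy^*=v$, which holds since $\bfy^*$ is optimal in \eqref{eq:lp_duality_y}. Hence $t=1$ and $\bfy^*=\bfe_1$.

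Once Lemma~\ref{lemma:value_game_minus_1} is available, there is no real obstacle; the only care needed is to handle the decomposition when $t<1$ and to choose $\bfx'$ optimal for the reduced game rather than for the original one.
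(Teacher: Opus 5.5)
Your proof is correct and follows essentially the same route as the paper: you split $\bfy^*$ into its weight on column~1 and the remaining part, and bound the remaining part below by $v_{-1}>v$ to force that weight to equal $1$. The differences are cosmetic. You get the bound from an optimal Player~1 strategy of $A_{-1}$ via the minimax theorem, whereas the paper reads it directly off the definition of $v_{-1}$ as a min-max value. You also check $\bfe_1\in\ney$ explicitly, which the paper leaves implicit in the nonemptiness of $\ney$, and your line fixing $\bfx^*\in\nex$ is never used.
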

Next, we record a simple consequence of Assumption~\ref{asspt:boundary}.

\begin{lemma}\label{lemma:one_component_less_v}
Suppose a zero-sum game $A$ satisfies Assumptions~\ref{asspt:no_dominated}--\ref{asspt:deduced_from_A1}. Then for every $i \geq 2$, the column $\bfc_i$ has at least one entry strictly less than $v$.
\end{lemma}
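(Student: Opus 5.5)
We argue by contradiction. Fix $i\ge 2$ and suppose every entry of $\bfc_i$ is at least $v$, so that $\bfc_i\ge v\onevector_n$ componentwise. The plan is to show that this forces some equilibrium strategy of Player~1 onto $\partial S_n$, violating Assumption~\ref{asspt:boundary}.

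First we use Assumption~\ref{asspt:deduced_from_A1}, which says $\bfc_i\notin\text{span}\{\onevector_n\}$. Hence $\bfc_i\ne v\onevector_n$, so some entry satisfies $[\bfc_i]_\ell>v$, while all entries are $\ge v$. Together with $\bfc_1=v\onevector_n$, this means column $i$ is weakly but not strictly dominated by column~1 from Player~2's viewpoint. Assumption~\ref{asspt:no_dominated} therefore gives no immediate contradiction, and we must use Assumption~\ref{asspt:boundary} instead.

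Second, we locate a boundary point of $\nex$. Every $\bfx\in S_n$ satisfies $\bfc_i^\top\bfx\ge v$, so column $i$ never lowers the lower envelope $\min_j \bfc_j^\top\bfx$ below $v$. Thus $\nex=\{\bfx\in S_n:\bfc_j^\top\bfx\ge v \ \forall j\}$ is unchanged if column $i$ is deleted. This alone does not yield a boundary point, so the real leverage must come from a different direction.

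The cleaner route is to apply Assumption~\ref{asspt:no_dominated} to Player~2 through the "never a best response" characterization quoted after that assumption. Since column $i$ is not strictly dominated, it is a best response to some $\bfx\in S_n$. That is, $\bfc_i^\top\bfx\le \bfc_1^\top\bfx=v$, and combined with $\bfc_i^\top\bfx\ge v$ this gives $\bfc_i^\top\bfx=v$. Because $\bfc_i-v\onevector_n\ge 0$ entrywise, we get $\bfx^\top(\bfc_i-v\onevector_n)=0$, so $x_k=0$ for every $k$ with $[\bfc_i]_k>v$; in particular $x_\ell=0$. Moreover $\min_j\bfc_j^\top\bfx=\bfc_i^\top\bfx=v$, since $i$ attains the minimum. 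Hence $\bfx\in\nex$ with $x_\ell=0$, so $\bfx\in\partial S_n\cap\nex$, contradicting Assumption~\ref{asspt:boundary}.

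The main obstacle is the second step: Assumption~\ref{asspt:no_dominated} only excludes strict domination, and column $i$ is merely weakly dominated. The fix is to invoke the Pearce equivalence (not strictly dominated $\Leftrightarrow$ best response to some mixed strategy) rather than domination directly, and then check that "best response" here includes ties with column~1. Since $\bry$ is defined by $\argmin$ with ties allowed, this check goes through. Note that Assumption~\ref{asspt:positive_measure} is used only to place us in the setting of Assumption~\ref{asspt:deduced_from_A1} with $\bfc_1=v\onevector_n$.
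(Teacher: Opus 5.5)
Your proof is correct and is essentially the paper's argument, run in contrapositive form. The paper shows directly that column $i$ is never a best response: at interior points of $S_n$, the entry strictly above $v$ gives $\bfc_i^\top\bfx>v=\bfc_1^\top\bfx$, and at boundary points Assumption~\ref{asspt:boundary} forces some column strictly below $v$; it then applies Pearce's characterization to contradict Assumption~\ref{asspt:no_dominated}, whereas you apply Pearce first and show the resulting best-response point would lie in $\partial S_n\cap\nex$.
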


We close this subsection with two properties of boundary points of $\nex$, which form the basis of the boundary analysis in Section~\ref{subsec:boundary_instability}. For $\bfw \in \RR^n$, define the spread
\begin{align*}
    \phi(\bfw) := \max_{j\in[n]} [\bfw]_j - \min_{j\in[n]} [\bfw]_j \ge 0,
\end{align*}
and note that $\phi(\bfw)=0$ if and only if $\bfw \in \text{span}\{\onevector_n\}$.

\begin{lemma}\label{lemma:binding}
Suppose $A$ satisfies Assumptions~\ref{asspt:no_dominated}--\ref{asspt:deduced_from_A1}. Then for every $\bfx^* \in \partial\nex$, the set $I:=\bry(\bfx^*)$ satisfies $1 \in I$ and $|I|\ge 2$.
\end{lemma}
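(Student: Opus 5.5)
We need to show two things for $\bfx^*\in\partial\nex$: that $1\in\bry(\bfx^*)$, and that some other column $i\ne 1$ also attains the minimum $v$ at $\bfx^*$.

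\emph{Step 1: $1\in I$.} Since $\nex$ is closed, $\bfx^*\in\nex$. Hence $\min_{j}\bfc_j^\top\bfx^*=v$. By Assumption~\ref{asspt:deduced_from_A1}, $\bfc_1^\top\bfx^*=v\,\onevector_n^\top\bfx^*=v$, so column~1 attains the minimum and $1\in I$.

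\emph{Step 2: $|I|\ge 2$, by contradiction.} Suppose $I=\{1\}$, so $\bfc_i^\top\bfx^*>v$ for every $i\ge 2$. Let $\delta:=\min_{i\ge2}(\bfc_i^\top\bfx^*-v)>0$. By Assumption~\ref{asspt:boundary}, $\bfx^*$ is fully mixed, so $\bfx^*\in\intr(S_n)$.

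By continuity of the finitely many affine maps $\bfx\mapsto\bfc_i^\top\bfx$, there is $\epsilon>0$ with two properties. First, $B_\epsilon(\bfx^*)\cap\operatorname{aff}(S_n)\subseteq S_n$. Second, $\bfc_i^\top\bfx>v$ for all $i\ge2$ and all $\bfx$ in that ball.

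For every such $\bfx$ we have $\bfc_1^\top\bfx=v$. Therefore $\min_j\bfc_j^\top\bfx=v$, which is the optimal value of \eqref{eq:lp_duality_x}, so $\bfx\in\nex$. This makes a whole relative neighborhood of $\bfx^*$ lie in $\nex$, i.e.\ $\bfx^*\in\intr(\nex)$. That contradicts $\bfx^*\in\partial\nex$, since $\nex$ is closed and its boundary is disjoint from its interior.

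The only delicate point is the topology. The boundary is taken in $\operatorname{aff}(S_n)$, so we need $\bfx^*$ to lie in the relative interior of $S_n$ in order for the small ball to stay inside the simplex. This is exactly what Assumption~\ref{asspt:boundary} provides. Without it, a point on $\partial S_n$ could be a boundary point of $\nex$ with $I=\{1\}$.

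Assumption~\ref{asspt:no_dominated} and Assumption~\ref{asspt:positive_measure} are used only through Assumption~\ref{asspt:deduced_from_A1}, which gives $\bfc_1=v\onevector_n$.
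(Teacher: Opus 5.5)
Your proof is correct and follows essentially the same route as the paper: you get $1\in I$ from $\bfc_1=v\onevector_n$, and then you argue by contradiction that if only column~1 binds at $\bfx^*$, a whole neighborhood of $\bfx^*$ lies in the equilibrium set, which would make $\bfx^*$ an interior point. You also invoke Assumption~\ref{asspt:boundary} explicitly to keep the small ball in $\operatorname{aff}(S_n)$ inside $S_n$, which the paper leaves implicit (it simply takes a neighborhood ``in $S_n$''), and your remark that the claim can fail at points of $\partial S_n$ without that assumption is correct.
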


\begin{lemma}\label{lemma:no_constant}
Suppose $A$ satisfies Assumptions~\ref{asspt:no_dominated}--\ref{asspt:deduced_from_A1}. Let $\bfx^* \in \partial\nex$ and $I=\bry(\bfx^*)$, and define
\begin{align*}
    W_I := \conv\bigl(\{\bfc_i : i \in I\setminus\{1\}\}\bigr).
\end{align*}
Then $W_I \cap \text{span}\{\onevector_n\} = \emptyset$. Consequently,
\begin{align*}
    \Delta_0 := \min_{\bfw\in W_I}\phi(\bfw) > 0.
\end{align*}
\end{lemma}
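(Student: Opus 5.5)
Suppose, for contradiction, that some $\bfw \in W_I$ lies in $\text{span}\{\onevector_n\}$, say $\bfw = \sum_{i\in I\setminus\{1\}} \lambda_i \bfc_i = c\,\onevector_n$ with $\lambda_i\ge 0$ and $\sum_i\lambda_i=1$. By Lemma~\ref{lemma:binding}, $I\setminus\{1\}$ is nonempty, so $W_I$ is a nonempty compact convex set.

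The first step pins down the constant $c$. Every $i\in I$ is a best response of Player~2 at $\bfx^*$, and $1\in I$ with $\bfc_1^\top\bfx^* = v$. Hence $\bfc_i^\top\bfx^* = v$ for all $i\in I$. Using $\onevector_n^\top\bfx^*=1$, we get
\begin{align*}
c=\bfw^\top\bfx^*=\sum_i\lambda_i\bfc_i^\top\bfx^*=v,
\end{align*}
so $\bfw=v\onevector_n$.

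The second step derives a contradiction from Assumption~\ref{asspt:boundary}, which is where the main work lies. The plan is to perturb $\bfx^*$ outward from $\nex$ and show the perturbation stays in $\nex$. Since $\bfx^*\in\partial\nex$ and $\nex$ is fully mixed, $\bfx^*$ lies in the relative interior of $S_n$. By convexity and positive measure, $\nex$ has nonempty interior. Pick $\bfx^\circ\in\intr(\nex)$ and set $\bfd := \bfx^*-\bfx^\circ$, which lies in $\nullspace(\onevector_n^\top)$. For $i\ge 2$ we have $\bfc_i^\top\bfx^\circ > v$ (the remark after Assumption~\ref{asspt:deduced_from_A1}), and for $i\in I\setminus\{1\}$ we have $\bfc_i^\top\bfx^*=v$. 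Together these give $\bfc_i^\top\bfd<0$ for every $i\in I\setminus\{1\}$. Taking the $\lambda$-combination yields $\bfw^\top\bfd<0$. But $\bfw=v\onevector_n$ gives $\bfw^\top\bfd = v\,\onevector_n^\top\bfd = 0$, a contradiction. This argument apparently does not even need Assumption~\ref{asspt:boundary}, only that $\intr(\nex)\neq\emptyset$ together with the strict inequality at interior points. I would double-check that the strict inequality $\bfc_i^\top\bfx^\circ>v$ for all $i\ge 2$ is justified exactly as stated in the paper. That remark relies on Assumption~\ref{asspt:deduced_from_A1} (no other column parallel to $\onevector_n$), so it is safe.

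The final step handles the ``consequently'' part. The spread $\phi$ is continuous (a difference of a max and a min of coordinates) and nonnegative, and it vanishes exactly on $\text{span}\{\onevector_n\}$. Since $W_I$ is compact and disjoint from that span, $\phi$ attains a minimum on $W_I$, and that minimum $\Delta_0$ is strictly positive. The only real obstacle is the sign argument in the second step: one must ensure every $i\in I\setminus\{1\}$ is strictly above $v$ at an interior point and exactly at $v$ at $\bfx^*$. The former comes from the paper's remark and the latter from $1\in I$.
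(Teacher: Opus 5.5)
Your proof is correct. After the first step, which is the same as the paper's ($c=v$), you reach the contradiction by a different route. The paper argues on Player~2's side. The weights $\lambda_i$ define a mixed strategy $\bar\bfy$ supported on $I\setminus\{1\}$ with $A\bar\bfy=v\onevector_n$. So $\bar\bfy$ solves the min--max problem~\eqref{eq:lp_duality_y} and is a second equilibrium strategy of Player~2, contradicting $\ney=\{\bfe_1\}$. That uniqueness is Corollary~\ref{cor:unique_NE_player2}, which rests on $v_{-1}>v$ from Lemma~\ref{lemma:value_game_minus_1}.

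You stay on Player~1's side and use only the strict inequality $\bfc_i^\top\bfx^\circ>v$ for $i\ge 2$ at a relative-interior point $\bfx^\circ$ of $\nex$. This inequality is justified, as you checked, by the perturbation argument of Proposition~\ref{prop:existence_of_one_vector} together with the ``no other constant column'' clause of Assumption~\ref{asspt:deduced_from_A1}.

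\textbf{What each route buys.} Yours is more self-contained: it bypasses Lemma~\ref{lemma:value_game_minus_1} and the corollary entirely. It also compresses to one line, $v=\bfw^\top\bfx^*=c=\bfw^\top\bfx^\circ>v$. The paper's route yields the interpretation stressed at the end of Section~\ref{sec:proof}: a mixture of deviations that kept all of Player~1's actions indifferent would itself be an equilibrium strategy of Player~2.

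\textbf{Presentation.} In your own argument, the step computing $c$ is superfluous, because $\onevector_n^\top\mathbf{d}=0$ annihilates $c\,\onevector_n$ for every $c$. Your announced plan does not describe what you actually do. You say you will ``perturb $\bfx^*$ outward from $\nex$ and show the perturbation stays in $\nex$'' and ``derive a contradiction from Assumption~\ref{asspt:boundary}.'' As you noticed yourself, Assumption~\ref{asspt:boundary} plays no role here, and it plays none in the paper's proof of this lemma either. Drop that framing and the remark that $\bfx^*$ lies in the relative interior of $S_n$.
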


\begin{pf}
Suppose, for contradiction, that there exist $\theta_i\ge0$ with $\sum_{i\in I\setminus\{1\}}\theta_i=1$ and $c\in\RR$ such that
\begin{align*}
    \bfw := \sum_{i\in I\setminus\{1\}}\theta_i\,\bfc_i = c\,\onevector_n.
\end{align*}
Since $I=\bry(\bfx^*)$ and $\bfx^*\in\nex$, we have $\bfc_i^\top\bfx^*=v$ for every $i\in I$. Hence
\begin{align*}
    c = c\,\onevector_n^\top\bfx^* = \bfw^\top\bfx^* = \sum_{i\in I\setminus\{1\}}\theta_i\,\bfc_i^\top\bfx^* = v.
\end{align*}
Now let $\bar\bfy\in S_m$ be the mixed strategy with $[\bar\bfy]_i=\theta_i$ for $i\in I\setminus\{1\}$ and zero otherwise. Then $A\bar\bfy=\bfw=v\onevector_n$, so
\begin{align*}
    \max_{j\in[n]}\bfr_j^\top\bar\bfy = \max_{j\in[n]}[A\bar\bfy]_j = v,
\end{align*}
i.e., $\bar\bfy$ solves problem~\eqref{eq:lp_duality_y} and hence $\bar\bfy\in\ney$. By Corollary~\ref{cor:unique_NE_player2}, $\ney=\{\bfe_1\}$, so $\bar\bfy=\bfe_1$; but $[\bar\bfy]_1=0$ by construction, a contradiction. The final claim follows since $W_I$ is a nonempty (Lemma~\ref{lemma:binding}) compact set on which the continuous function $\phi$ is strictly positive.
\end{pf}

\subsection{Non-Convergence to Interior Points of \texorpdfstring{$\nex$}{NEx}}
We now show that FP cannot converge to any point in the interior of $\nex$. Before turning to the proof, we outline the main intuition and establish several supporting properties.
The basic intuition is that, once the trajectory remains close to $\nex$, Player~1 must play every action a linear number of times, while Corollary~\ref{cor:unique_NE_player2} implies that Player~2 plays Action~1 for a linear number of times. This forces the dynamics to revisit the regions $X_1 \times Y_i$ repeatedly. The following lemma makes this observation precise.

\begin{lemma}\label{lemma:infinite_reach}
    Under Assumptions~\ref{asspt:no_dominated}--\ref{asspt:deduced_from_A1}, each region $X_1 \times Y_1,\dots,X_1 \times Y_n$ is visited $\Omega(T)$ times.
\end{lemma}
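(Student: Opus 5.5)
We will read the statement of Lemma~\ref{lemma:infinite_reach} as follows. There exist $c>0$ and $T_0$ such that, for all $T\ge T_0$, the number of indices $k\le T$ with $(\hat\bfx(k),\hat\bfy(k))\in X_1\times Y_j$ is at least $cT$, for every $j\in[n]$. Visiting $X_1\times Y_j$ at step $k$ includes the possibility that $(p_k,q_k)=(j,1)$ is a feasible choice. The plan is to bound two kinds of steps separately: steps at which Player~1 plays $j$, which force $\hat\bfy(k)\in Y_j$, and steps at which Player~2 plays $1$, which force $\hat\bfx(k)\in X_1$. We then show these two sets of times must overlap linearly.

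\textbf{Step 1 (Player~2 plays Action~1 linearly often).} By Robinson's theorem and Corollary~\ref{cor:unique_NE_player2}, $\hat\bfy(k)\to\bfe_1$. Writing $\hat\bfy(T)$ as a convex combination of the prior and the played actions, the number of $k<T$ with $q_k=1$ is at least $(1-\delta)T$ for all large $T$, for any fixed $\delta>0$. Since $q_k=1$ requires $1\in\bry(\hat\bfx(k))$, i.e.\ $\hat\bfx(k)\in X_1$, the trajectory lies in $X_1=\nex$ at all but $\delta T$ of the first $T$ steps.

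\textbf{Step 2 (Player~1 plays each action linearly often).} By Robinson, $\dist{\hat\bfx(k)}{\nex}\to0$. By Assumption~\ref{asspt:boundary} and compactness, $\nex$ sits inside $\{\bfx: x_j\ge 2\eta\ \forall j\}$ for some $\eta>0$. So for large $T$ we have $[\hat\bfx(T)]_j\ge\eta$. Since the prior contributes at most $k_1/(T+k_1)$, the number of $k<T$ with $p_k=j$ is at least $\eta T/2$ for large $T$. Each such step has $\hat\bfy(k)\in Y_j$.

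\textbf{Step 3 (overlap).} Choose $\delta=\eta/4$. Among the at least $\eta T/2$ steps $k<T$ with $p_k=j$, at most $\delta T$ can fail to have $\hat\bfx(k)\in X_1$ by Step~1. Hence at least $\eta T/4$ steps have $(\hat\bfx(k),\hat\bfy(k))\in X_1\times Y_j$, which gives the $\Omega(T)$ bound uniformly in $j$.

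The main obstacle is bookkeeping with the prior terms $k_1,k_2$ and the no-prior convention $\hat\bfx(0)=\zerovector$. Both contribute only $O(1)$ weight, so they are absorbed by taking $T$ large.

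Under the stronger reading, where one asks for simultaneous play $(p_k,q_k)=(j,1)$, this is still immediate. Note that $q_k=1$ is counted at the same step $k$ as $p_k$, because both actions are chosen from the state at step $k$. So the overlap count in Step~3 is exactly the number of steps with $(p_k,q_k)=(j,1)$.
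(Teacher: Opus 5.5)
Your proposal is correct and follows essentially the same route as the paper: the coordinates of $\nex$ are uniformly bounded below (Assumption~\ref{asspt:boundary} and compactness), so each of Player~1's actions is played linearly often; $\ney=\{\bfe_1\}$ means Player~2 plays Action~1 at all but $o(T)$ steps; and the overlap gives $\Omega(T)$ steps with $(p_k,q_k)=(j,1)$, hence visits to $X_1\times Y_j$. Your explicit choice $\delta=\eta/4$ just makes the paper's $T-o(T)$ count quantitative. The only bookkeeping caveat is that at $k=0$ with no prior the best-response set contains every action, so $p_0=j$ or $q_0=1$ does not place the state in $Y_j$ or $X_1$; this is a single step and is absorbed as you indicate.
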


\begin{pf}
Consider the FP dynamics $\{(\hat\bfx(k),\hat\bfy(k))\}_{k=1}^T$. Because $\nex$ is compact and $\partial S_n \cap \nex=\emptyset$, there exists $\delta>0$ such that every $\bfx\in \nex$ satisfies $\bfx>\delta\onevector_n$ coordinatewise. Since FP converges to the NE set, for all sufficiently large $T$, Player~1 must play each action at least $\delta T/2$ times up to time $T$.

On the other hand, since $\ney=\{\bfe_1\}$ (Corollary~\ref{cor:unique_NE_player2}) and FP converges to the NE set, Player~2 plays Action~1 for $T - o(T)$ times and all other actions for only $o(T)$ times. Hence, for each $i\in[n]$, there are $\Omega(T)$ time steps at which Player~1 plays Action~$i$ and Player~2 simultaneously plays Action~1.

Recall that Player~1 plays Action~$i$ only when $\hat\bfy(k)\in Y_i$, and Player~2 plays Action~1 only when $\hat\bfx(k)\in X_1$. Therefore, each region $X_1\times Y_i$ is visited $\Omega(T)$ times.
\end{pf}

The next lemma formalizes the inertia phenomenon: as long as $\hat\bfx(k)$ remains in $\intr(X_1)$, Player~2 keeps playing Action~1, and the best-response set of Player~1 does not change.

\begin{lemma}\label{lemma:inertia}
    Under Assumptions~\ref{asspt:no_dominated}--\ref{asspt:deduced_from_A1}, if $(\hat\bfx(k),\hat\bfy(k)) \in \intr(X_1)\times Y_i^c$ for some $i\in[n]$, then $\hat\bfy(k+1)\in Y_i^c$.
\end{lemma}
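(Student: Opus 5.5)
Since $\hat\bfx(k)\in\intr(X_1)$, every other column satisfies $\bfc_j^\top\hat\bfx(k)>\bfc_1^\top\hat\bfx(k)$. Hence $\bry(\hat\bfx(k))=\{1\}$, and Player~2 is forced to play $q_k=1$ regardless of the tie-breaking rule. The update \eqref{eq:dynamics} then gives
\begin{align*}
    \hat\bfy(k+1)=\frac{(k+k_2)\,\hat\bfy(k)+\bfe_1}{k+k_2+1},
\end{align*}
a convex combination of $\hat\bfy(k)$ and $\bfe_1$, or just $\bfe_1$ when $k+k_2=0$. It therefore suffices to show that this combination preserves membership in $Y_i^c$.

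The key observation is that $A\bfe_1=\bfc_1=v\onevector_n$ by Assumption~\ref{asspt:deduced_from_A1}, so adding weight on $\bfe_1$ shifts all of Player~1's payoffs by the same constant. Concretely, for every $j\in[n]$,
\begin{align*}
    (k+k_2+1)\,\bfr_j^\top\hat\bfy(k+1)=(k+k_2)\,\bfr_j^\top\hat\bfy(k)+v.
\end{align*}
Consequently, for any $j,\ell\in[n]$, the difference $\bfr_j^\top\hat\bfy(k+1)-\bfr_\ell^\top\hat\bfy(k+1)$ equals $\tfrac{k+k_2}{k+k_2+1}$ times the corresponding difference at time $k$. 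This is a positive rescaling whenever $k+k_2>0$, so all sign patterns of pairwise payoff differences are preserved, and in particular $\brx(\hat\bfy(k+1))=\brx(\hat\bfy(k))$.

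Now use $\hat\bfy(k)\in Y_i^c$, meaning some $\ell$ has $\bfr_\ell^\top\hat\bfy(k)>\bfr_i^\top\hat\bfy(k)$. By the rescaling identity, the same strict inequality holds at time $k+1$, so $\hat\bfy(k+1)\notin Y_i$.

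The only delicate point is the degenerate case $k+k_2=0$, where $\hat\bfy(0)=\zerovector_m$. Here $\hat\bfy(0)\notin S_m$, and $Y_i$ is defined only as a subset of $S_m$. The cleanest treatment is to note that $\hat\bfy(1)=\bfe_1$ and $A\bfe_1=v\onevector_n$, so $\bfe_1\in Y_i$ for every $i$. The lemma must then be read with the convention that $Y_i^c$ is taken in $S_m$, which excludes $\zerovector_m$. Under that convention the hypothesis can only hold for $k+k_2\ge1$, and the scaling factor is strictly positive. I expect this boundary convention to be the only real obstacle; the main argument is the one-line observation that column~1 is constant.
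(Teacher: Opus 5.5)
Your proof is correct and is essentially the paper's argument: $\hat\bfx(k)\in\intr(X_1)$ forces $q_k=1$ (the strict inequalities $\bfc_j^\top\hat\bfx(k)>v$ for $j\neq 1$, which you assert without proof, are the consequence of Assumption~\ref{asspt:deduced_from_A1} recorded right after that assumption), and because $\bfc_1=v\onevector_n$ the update shifts all of Player~1's payoffs by a common constant, so $\brx(\hat\bfy(k+1))=\brx(\hat\bfy(k))$. Your explicit treatment of the case $k+k_2=0$ is a worthwhile refinement that the paper handles only implicitly: since $\hat\bfy(1)=\bfe_1$ lies in every $Y_i$, the hypothesis must exclude $\hat\bfy(0)=\zerovector_m$, and the paper's step ``$\hat\bfy(k)\in Y_i^c$ implies $i\notin\brx(\hat\bfy(k))$'' does exactly this, given the convention $\brx(\zerovector_m)=[n]$.
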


\begin{pf}
    Since $\hat\bfx(k)\in \intr(X_1)$, Player~2 plays Action~1 at time $k$. Hence
    \begin{align}
        \brx(\hat\bfy(k+1))
        &= \argmax_{j\in[n]} [A\hat\bfy(k+1)]_j \nonumber\\
        &= \argmax_{j\in[n]} [(k+k_2+1)A\hat\bfy(k+1)]_j \nonumber\\
        &= \argmax_{j\in[n]} [(k+k_2)A\hat\bfy(k)+\bfc_1]_j \nonumber\\
        &= \argmax_{j\in[n]} [(k+k_2)A\hat\bfy(k)]_j \nonumber\\
        &= \brx(\hat\bfy(k)),
        \label{eq:inertia}
    \end{align}
    where the third line uses
    \begin{align*}
        (k+k_2+1)\hat\bfy(k+1)=(k+k_2)\hat\bfy(k)+\bfe_1,
    \end{align*}
    and the fourth line follows from $\bfc_1=v\onevector_n$.

    Since $\hat\bfy(k)\in Y_i^c$, we have $i\notin \brx(\hat\bfy(k))$. By \eqref{eq:inertia}, it follows that $i\notin \brx(\hat\bfy(k+1))$, and hence $\hat\bfy(k+1)\in Y_i^c$.
\end{pf}

As a consequence, if $\hat\bfx(k)\in \intr(X_1)$ for all $k\in [K_1,K_2)$ with $K_1<K_2$, then the best-response set of Player~1 remains fixed throughout this interval and is equal to $\brx(\hat\bfy(K_1))$.

\begin{lemma}\label{lemma:conditional_instability}
Under Assumptions~\ref{asspt:no_dominated}--\ref{asspt:deduced_from_A1}, if $(\hat\bfx(k),\hat\bfy(k)) \in \intr(X_1)\times Y_i^c$ for some $i\in[n]$, then there exists $k'>k$ such that $(\hat\bfx(k'),\hat\bfy(k')) \in \intr(X_1)^c\times Y_i^c$. In particular, Player~2 must eventually choose an action other than Action~1.
\end{lemma}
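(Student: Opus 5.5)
The plan is a proof by contradiction. Suppose $(\hat\bfx(k),\hat\bfy(k))\in\intr(X_1)\times Y_i^c$ and that $\hat\bfx(k')\in\intr(X_1)$ for every $k'\ge k$. The final claim is immediate once the main claim is shown: if $\hat\bfx(k')\in\intr(X_1)$, then $\bry(\hat\bfx(k'))=\{1\}$, because $\bfc_j^\top\bfx>v=\bfc_1^\top\bfx$ on $\intr(\nex)=\intr(X_1)$ for $j\neq1$. So Player~2 plays Action~1 exactly as long as the trajectory stays in $\intr(X_1)$, and leaving it is equivalent to Player~2 eventually playing something else, at least up to the first exit.

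The first step is to apply Lemma~\ref{lemma:inertia} inductively. If the trajectory stays in $\intr(X_1)$ from time $k$ on, then $\brx(\hat\bfy(k'))=\brx(\hat\bfy(k))=:J$ for all $k'\ge k$. By hypothesis $i\notin J$, so Player~1 never plays Action~$i$ after time $k$. The count of Action~$i$ in $(k'+k_1)\hat\bfx(k')$ is then frozen at its value at time $k$, and $[\hat\bfx(k')]_i\to0$ as $k'\to\infty$.

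This contradicts the boundedness of $\nex$ away from $\partial S_n$. By Assumption~\ref{asspt:boundary} and compactness there is $\delta>0$ with $[\bfx]_i>\delta$ for all $\bfx\in\nex=X_1$. Therefore $\hat\bfx(k')$ must leave $X_1\supseteq\intr(X_1)$ in finite time. (This does not even need Robinson's theorem; it only uses the fact that the trajectory was assumed to stay in $X_1$.) Let $k'$ be the first time after $k$ with $\hat\bfx(k')\notin\intr(X_1)$. For $k\le t<k'$ we have $\hat\bfx(t)\in\intr(X_1)$, so Lemma~\ref{lemma:inertia} applies at each such $t$ and gives $\hat\bfy(t+1)\in Y_i^c$. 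In particular $\hat\bfy(k')\in Y_i^c$, which gives $(\hat\bfx(k'),\hat\bfy(k'))\in\intr(X_1)^c\times Y_i^c$.

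For the final assertion, $\hat\bfx(k'-1)\in\intr(X_1)$ forces Player~2 to play Action~1 at time $k'-1$. So the exit is caused by Player~1's move, not Player~2's, and the claim "Player~2 must eventually choose an action other than Action~1" has to be read after the exit. At $k'$ the point $\hat\bfx(k')$ lies either on $\partial X_1$, where Lemma~\ref{lemma:binding} gives ties, or outside $X_1$, where $1\notin\bry$. If the trajectory then stayed in $X_1$ with Player~2 always choosing Action~1, the same inertia computation (which only needs Player~2 to play Action~1, not interiority) keeps $J$ fixed and again drives $[\hat\bfx]_i$ to $0$, contradicting $\delta$. Hence at some later time Player~2 chooses an action other than Action~1. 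The main care needed is this bookkeeping: the inertia argument must be run on "Player~2 plays Action~1" rather than on interiority alone.
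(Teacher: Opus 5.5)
Your proof is correct, but it gets its key contradiction from a different source than the paper does.

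\textbf{The paper's route.} The paper assumes that Player~2 plays Action~1 at every $t\ge k$. Inertia then keeps $\hat\bfy(t)\in Y_i^c$, and this contradicts Lemma~\ref{lemma:infinite_reach}. That lemma rests on Robinson's set-convergence theorem and on $\ney=\{\bfe_1\}$ (Corollary~\ref{cor:unique_NE_player2}). The paper then takes $k'$ to be Player~2's first deviation time. So the main claim and the ``in particular'' claim come out of the same $k'$, and $k'$ is by construction a deviation, which is what the later counting argument needs.

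\textbf{Your route.} You argue locally instead. A frozen best-response set $J\not\ni i$ makes $(t+k_1)[\hat\bfx(t)]_i$ constant, so $[\hat\bfx(t)]_i\to0$. This cannot happen while the trajectory stays in $X_1=\nex$, whose points all have coordinates above some $\delta>0$. Your argument therefore avoids Robinson's theorem and Corollary~\ref{cor:unique_NE_player2} entirely. The cost is a second pass for the deviation claim. Your $k'$ is the first exit from $\intr(X_1)$, and it may land on $\partial X_1$, where Player~2 can still tie-break to Action~1.

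\textbf{Your second pass.} It is sound. Player~2 playing Action~1 at time $t$ already forces $\hat\bfx(t)\in X_1$, so your clause ``stayed in $X_1$'' holds automatically. You also note that the inertia computation only needs Player~2 to play Action~1, not interiority. The paper relies on exactly this when it applies Lemma~\ref{lemma:inertia} on $\{k,\dots,k'-1\}$.

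\textbf{Possible simplification.} You could run your $\delta$-contradiction directly on the paper's hypothesis ``Player~2 plays Action~1 for all $t\ge k$.'' That gives both claims from a single $k'$, the first deviation time.

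\textbf{Opening remark.} Your first-paragraph claim that the final assertion is ``immediate'' from the main one is not right as stated, since exiting $\intr(X_1)$ does not force a deviation. You correct this yourself at the end, so no gap remains.
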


\begin{pf}
By Lemma~\ref{lemma:infinite_reach}, the region $X_1\times Y_i$ is visited infinitely often. Suppose, for contradiction, that Player~2 continues to play Action~1 at every time $t\ge k$. Then the inertia Eq.~\eqref{eq:inertia} implies inductively that
\begin{align*}
    \hat\bfy(t)\in Y_i^c
    \qquad \text{for all } t\ge k.
\end{align*}
Hence $(\hat\bfx(t),\hat\bfy(t))\notin X_1\times Y_i$ for all $t\ge k$, contradicting Lemma~\ref{lemma:infinite_reach}. Therefore, Player~2 must choose an action other than Action~1 at some time after $k$.

Let $k'>k$ be the first time at which Player~2 chooses an action other than Action~1. Since Player~2 plays Action~1 at all times $t\in\{k,\dots,k'-1\}$, repeated application of Lemma~\ref{lemma:inertia} yields
\begin{align*}
    \hat\bfy(k')\in Y_i^c.
\end{align*}
Moreover, because Player~2 does not choose Action~1 at time $k'$, we must have $\hat\bfx(k')\notin \intr(X_1)$; otherwise, if $\hat\bfx(k')\in \intr(X_1)$, then Player~2's unique best response would be Action~1. Thus
\begin{align*}
    (\hat\bfx(k'),\hat\bfy(k'))\in \intr(X_1)^c\times Y_i^c.
\end{align*}
\end{pf}

Lemma~\ref{lemma:conditional_instability} shows that if $\hat\bfy(k)\notin \bigcap_{i\in[n]} Y_i$, then the FP trajectory cannot remain in $\intr(\nex)$ forever. Equivalently, whenever $\hat\bfx(k)\in \intr(\nex)$ and $\hat\bfy(k)\notin \bigcap_{i\in[n]} Y_i$, there exists a later time $k'\ge k$ at which $\hat\bfx(k')\in \intr(\nex)^c$. The following proposition provides a condition ensuring that $\hat\bfy(k)\notin \bigcap_{i\in[n]} Y_i$ throughout the FP trajectory.

\begin{proposition}\label{prop:interior_instability}
Under Assumptions~\ref{asspt:no_dominated}--\ref{asspt:deduced_from_A1}, if there exists $k'\ge 0$ such that $\hat\bfx(k') \in \{\zerovector_n\}\cup (S_n\setminus X_1)$, then the region $X_1 \times \bigcap_{i \in [n]} Y_i$ cannot be visited at any later time $k>k'$.
\end{proposition}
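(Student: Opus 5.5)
The plan is to argue by contradiction using the standard ``accumulated payoff'' bookkeeping of fictitious play. Set $U(t):=(t+k_2)A\hat\bfy(t)\in\RR^n$, Player~1's cumulative payoff vector, and $V(t):=(t+k_1)A^\top\hat\bfx(t)\in\RR^m$, Player~2's cumulative payoff vector, so that $p_t\in\argmax U(t)$ and $q_t\in\argmin V(t)$. Let $S(k',k):=\sum_{t=k'}^{k-1}A_{p_tq_t}$. Since $U(t+1)_{p_t}=U(t)_{p_t}+A_{p_tq_t}$ and $V(t+1)_{q_t}=V(t)_{q_t}+A_{p_tq_t}$, I would first record the two telescoping inequalities
\begin{align*}
\max U(k)\ \ge\ \max U(k')+S(k',k),\qquad \min V(k)\ \le\ \min V(k')+S(k',k),
\end{align*}
valid for every $k>k'$.

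Next I would translate the event to be excluded. Suppose $\hat\bfx(k)\in X_1$ and $\hat\bfy(k)\in\bigcap_{i\in[n]}Y_i$ for some $k>k'$. The second condition says that all rows are tied, i.e.\ $A\hat\bfy(k)=c\,\onevector_n$ for some $c$, so $U(k)=(k+k_2)c\,\onevector_n$. The first condition, together with $\bfc_1=v\onevector_n$, gives $\min V(k)=(k+k_1)v$. Evaluating $\hat\bfx(k)^\top A\hat\bfy(k)$ in two ways gives
\begin{align*}
c=\sum_{j}[\hat\bfy(k)]_j\,\bfc_j^\top\hat\bfx(k)\ \ge\ v .
\end{align*}
Equality holds only if $\hat\bfy(k)$ is supported on $\bry(\hat\bfx(k))$. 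Writing $\hat\bfy(k)=y_1\bfe_1+(1-y_1)\bar\bfy$ with $y_1<1$, the constancy of $A\hat\bfy(k)$ forces $A\bar\bfy=c'\onevector_n$. By Lemma~\ref{lemma:value_game_minus_1}, this gives $c'\ge v_{-1}>v$, hence $c>v$ strictly. (The degenerate case $\hat\bfy(k)=\bfe_1$ is impossible for $k>k'$ once one checks that Player~2 played a non-$1$ action at or after $k'$, using $\hat\bfx(k')\notin X_1$.)

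Then I would combine these facts with the hypothesis at time $k'$. If $\hat\bfx(k')\notin X_1$, then $\min V(k')<(k'+k_1)v$ strictly. If $\hat\bfx(k')=\zerovector_n$, then $V(k')=\zerovector_m$ and $U(k')$ is correspondingly trivial. Plugging both bounds into the telescoping inequalities should give a contradiction: lower-bounding $S(k',k)$ via the $V$-inequality, and comparing with the $U$-inequality, in which $\max U(k)$ equals the common value $x^\top U(k)$ for every $x\in S_n$ and in particular for $x=\hat\bfx(k)$. The contradiction should come from the equality $\hat\bfx(k)^\top U(k)=\hat\bfy(k)^\top V(k)$ (both equal $k\,\hat\bfx(k)^\top A\hat\bfy(k)$ in the no-prior normalization, with the analogous weighted identity in general) combined with the strict gaps above.

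I expect the main obstacle to be exactly this final combination. The telescoping inequalities point in the ``wrong'' direction for a naive argument (they yield $kc\ge S\ge kv$, which is consistent with $c>v$). The strictness must therefore be propagated from time $k'$ rather than obtained from time $k$ alone. My first attempt would be to track the quantity $\max U(t)-\min V(t)$ along the trajectory. I would show that it is nondecreasing in FP. I would show that it equals $0$ whenever the state is in $X_1\times\bigcap_iY_i$: the all-rows-tied condition lets one replace $\max U$ by an average, and $X_1$ pins $\min V$ to $v$. I would also show that it is strictly positive (or has the wrong sign for returning to $0$) as soon as $\hat\bfx(k')\in\{\zerovector_n\}\cup(S_n\setminus X_1)$. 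If monotonicity of this gap fails to be exact, I would instead use the inertia identity \eqref{eq:inertia} to restrict attention to the last time before $k$ at which Player~2 played a non-$1$ action. Between that time and $k$, the vector $U$ shifts only by multiples of $\onevector_n$, which reduces the claim to a single step.
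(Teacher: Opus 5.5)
\textbf{There is a genuine gap: your potential argument fails, and the statement is false without Assumption~\ref{asspt:boundary}, which your main argument never uses.}

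\emph{The monotone potential.} Monotonicity itself is fine: $U(t+1)=U(t)+\bfc_{q_t}$ and $V(t+1)=V(t)+\bfr_{p_t}$ give $G(t+1)\ge G(t)$ for $G(t):=\max_l[U(t)]_l-\min_l[V(t)]_l$. Your other two claims about $G$ are false.
\begin{itemize}
\item On $X_1\times\bigcap_i Y_i$ you have $\max_l[U(k)]_l=(k+k_2)c$ and $\min_l[V(k)]_l=(k+k_1)v$. With no priors this gives $G(k)=k(c-v)$, which is strictly positive whenever $\hat\bfy(k)\neq\bfe_1$, by your own argument that $c>v$. So $G$ does not vanish on the region.
\item With no priors, $G(0)=0$, so $G$ is not positive at $k'=0$.
\end{itemize}
A nondecreasing quantity that moves from $0$ to a positive value yields no contradiction. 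This is the same ``wrong direction'' you already noticed in the telescoping bounds.

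\emph{Why no global bookkeeping can work.} For $A=\begin{pmatrix}0&1&0\\0&0&1\end{pmatrix}$, Assumptions~\ref{asspt:no_dominated}, \ref{asspt:positive_measure} and~\ref{asspt:deduced_from_A1} hold with $v=0$ and $X_1=\nex=S_2$. Start from no priors. Let Player~1 play Action~1 at times $0$ and $1$, and let Player~2 play column~$2$ and then column~$3$; all of these are legal best responses. Then $\hat\bfx(2)=\bfe_1\in X_1$ and $A\hat\bfy(2)=\tfrac12\onevector_2$, so the region is visited at $k=2$.

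\emph{Your fallback.} It is the right reduction: it is the paper's Case~2 (the inertia step) applied repeatedly. But it stops exactly where the content of the proof lies. Let $s<k$ be the last time Player~2 plays some $i\ge 2$. Inertia makes $U(s+1)$ a constant vector. Also $\hat\bfx(s+1)\in X_1$, because either $s+1=k$ or Player~2 plays Action~1 at time $s+1$.

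The missing step is the paper's Case~1, which runs as follows.
\begin{itemize}
\item Since $U(s)=U(s+1)-\bfc_i$, Player~1's action at time $s$ satisfies $a_{p_s i}=\min_j a_{ji}$.
\item By Lemma~\ref{lemma:one_component_less_v}, $\min_j a_{ji}<v$. This is where Assumptions~\ref{asspt:no_dominated} and~\ref{asspt:boundary} enter.
\item Since $[V(s)]_i=\min_l[V(s)]_l\le[V(s)]_1$, you get $[V(s+1)]_i<[V(s)]_1+v=[V(s+1)]_1$.
\item Hence $\hat\bfx(s+1)\notin X_1$, a contradiction.
\end{itemize}
In the example above this is precisely what fails, since $a_{13}=0=v$. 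So the strictness is generated locally at the last deviation; it is not propagated from $k'$.

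\emph{Where the hypothesis at $k'$ enters.} It is needed only to guarantee that such an $s$ exists, i.e., that Player~2 deviates at some time before $k$.
\begin{itemize}
\item If $\hat\bfx(k')\in S_n\setminus X_1$, Player~2 deviates at time $k'$.
\item If $\hat\bfx(k')=\zerovector_n$, Player~2 may play Action~1 at time $0$, and your degenerate-case remark does not cover this. Here Assumption~\ref{asspt:boundary} gives $\hat\bfx(1)=\bfe_{p_0}\notin X_1$. This rules out $k=1$, and it forces a deviation at time $1$ when $k\ge 2$.
\end{itemize}
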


\begin{pf}
Suppose, for contradiction, that $k>k'$ is the earliest time such that
\begin{align*}
    \hat\bfx(k)\in X_1
    \qquad\text{and}\qquad
    \hat\bfy(k)\in \bigcap_{i\in[n]} Y_i.
\end{align*}
We first observe that $k-1+k_1\ge1$, so that $\hat\bfx(k-1)\in S_n$. Indeed, this is immediate if $k_1>0$ or $k'\ge1$. In the remaining case $k_1=0$ and $k'=0$, we must have $k\ge2$: if $k=1$, then $\hat\bfx(1)\in\partial S_n$, while Assumption~\ref{asspt:boundary} and $X_1=\nex$ give $\partial S_n\cap X_1=\emptyset$, contradicting $\hat\bfx(1)\in X_1$.

Since $\hat\bfy(k)\in \bigcap_{i\in[n]} Y_i$, we have
\begin{align*}
    \bfr_1^\top \hat\bfy(k)=\cdots=\bfr_n^\top \hat\bfy(k).
\end{align*}

\noindent\textbf{Case 1:} Suppose Player~2 plays Action~$i\ge 2$ at time $k-1$. Then
\begin{align*}
    (k-1+k_2)\hat\bfy(k-1)+\bfe_i=(k+k_2)\hat\bfy(k),
\end{align*}
and hence
\begin{align*}
    \brx(\hat\bfy(k-1))
    &= \argmax_{j\in[n]} \bfr_j^\top \hat\bfy(k-1) \\
    &= \argmax_{j\in[n]} \bfr_j^\top\bigl((k+k_2)\hat\bfy(k)-\bfe_i\bigr) \\
    &= \argmin_{j\in[n]} a_{ji}.
\end{align*}
Let $j'$ be the action chosen by Player~1 at time $k-1$. Then
\begin{align*}
    j'\in \argmin_{j\in[n]} a_{ji}.
\end{align*}
By Lemma~\ref{lemma:one_component_less_v}, we have $a_{j'i}<v$.
Since Player~2 chose Action~$i$ at time $k-1$, we also have
\begin{align*}
    \bfc_i^\top \hat\bfx(k-1)
    = \min_{l\in[m]} \bfc_l^\top \hat\bfx(k-1)
    \le \bfc_1^\top \hat\bfx(k-1)
    = v.
\end{align*}
Because $(k+k_1)\hat\bfx(k)=(k-1+k_1)\hat\bfx(k-1)+\bfe_{j'}$,
\begin{align*}
    \bfc_i^\top \hat\bfx(k)
    = \frac{(k-1+k_1)\,\bfc_i^\top \hat\bfx(k-1)+a_{j'i}}{k+k_1}
    < v.
\end{align*}
This contradicts $\hat\bfx(k)\in X_1$, since $\bfc_1^\top \hat\bfx(k)=v$.

\noindent\textbf{Case 2:} Suppose Player~2 plays Action~1 at time $k-1$. Then $\hat\bfx(k-1)\in X_1$. Moreover, by the same calculation as in \eqref{eq:inertia},
\begin{align*}
    \brx(\hat\bfy(k-1))=\brx(\hat\bfy(k)).
\end{align*}
Since $\hat\bfy(k)\in \bigcap_{i\in[n]} Y_i$, we have $\brx(\hat\bfy(k))=[n]$, and hence $\brx(\hat\bfy(k-1))=[n]$ as well. Therefore
\begin{align*}
    \hat\bfy(k-1)\in \bigcap_{i\in[n]} Y_i.
\end{align*}
It follows that
\begin{align*}
    (\hat\bfx(k-1),\hat\bfy(k-1))\in X_1\times \bigcap_{i\in[n]} Y_i,
\end{align*}
contradicting the minimality of $k$ if $k-1>k'$, and contradicting the hypothesis $\hat\bfx(k')\in \{\zerovector_n\}\cup(S_n\setminus X_1)$ if $k-1=k'$.
\end{pf}

Recall that $\hat\bfx(0)\notin\nex=X_1$: if $k_1>0$, then $\hat\bfx(0)\in S_n\setminus X_1$, and if $k_1=0$, then $\hat\bfx(0)=\zerovector_n$. Hence the hypothesis of Proposition~\ref{prop:interior_instability} is satisfied with $k'=0$, and the region $X_1 \times \bigcap_{i\in[n]} Y_i$ can never be visited after time $0$. Consequently, whenever the trajectory enters $\intr(\nex)=\intr(X_1)$, the component $\hat\bfy(k)$ must lie outside $\bigcap_{i\in[n]} Y_i$, so Lemma~\ref{lemma:conditional_instability} implies that the trajectory cannot remain in $\intr(\nex)$ forever. This rules out convergence to any point in $\intr(\nex)$. Moreover, since Lemma~\ref{lemma:infinite_reach} shows that the regions $X_1\times Y_i$ are visited infinitely often, repeated application of Lemma~\ref{lemma:conditional_instability} implies that Player~2 must choose an action other than Action~1 infinitely often. We call such a time step, at which Player~2 plays an action other than its unique equilibrium action (Action~1), a \textit{deviation}.

\subsection{Instability at Boundary Points of \texorpdfstring{$\nex$}{NEx}}\label{subsec:boundary_instability}
We now rule out convergence to boundary points of $\nex$. In contrast with the interior case, the argument is a counting argument. The key observation is that there are two ways to bound the length of a time window within which Player~1 must play every action. The first comes from pointwise convergence itself: if $\hat\bfx(k)$ converges to a fully mixed point, then every action must be played within a window whose length is proportional to the elapsed time (Lemma~\ref{lemma:quota}). The second comes from the structure of cumulative payoffs. An action is played only when its cumulative payoff is maximal, and since Player~2's deviations from Action~1 never reward all of Player~1's actions equally (Lemma~\ref{lemma:no_constant}), the deviations drive the cumulative payoffs apart: the gap of the most disadvantaged action grows in proportion to the total number of deviations, while each deviation can close the gap only by a bounded amount. Hence a lagging action must \textit{climb} back to the maximum in bounded increments, and the window required for this climb grows strictly faster than the window permitted by pointwise convergence. The proof below makes this tension precise.

\begin{lemma}\label{lemma:quota}
Suppose $\hat\bfx(k)\to\bfx^*$ with $\delta^*:=\min_{j\in[n]}[\bfx^*]_j>0$. Then for every $\beta\in(0,1]$ there exists $T_*\in\NN$ such that, for every $T\ge T_*$ and every $j\in[n]$, Player~1 plays Action~$j$ at least once during the time interval $[T,\lceil(1+\beta)T\rceil)$.
\end{lemma}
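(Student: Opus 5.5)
The plan is a direct counting argument by contradiction, based on the cumulative-count identity behind \eqref{eq:dynamics}. Write $N_j(k):=(k+k_1)[\hat\bfx(k)]_j$. This is the prior weight plus the number of times Player~1 has played Action~$j$ during $[0,k)$. If Action~$j$ is not played at any time in $[T,T')$, with $T':=\lceil(1+\beta)T\rceil$, then $N_j(T')=N_j(T)$. Dividing by $T'+k_1$ gives
\begin{align*}
    [\hat\bfx(T')]_j=\frac{T+k_1}{T'+k_1}\,[\hat\bfx(T)]_j\le \frac{T+k_1}{(1+\beta)T+k_1}\,[\hat\bfx(T)]_j .
\end{align*}
So the $j$-th coordinate shrinks by a factor bounded away from $1$ once $T$ is large. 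This is incompatible with convergence to a point whose $j$-th coordinate is strictly positive.

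The key steps are as follows. Fix $\beta\in(0,1]$. Because $k_1$ is a fixed constant, the ratio $\frac{T+k_1}{(1+\beta)T+k_1}$ tends to $1/(1+\beta)<1$. Choose $T_0$ so that this ratio is at most $\rho:=\frac{1+\beta/2}{1+\beta}<1$ for all $T\ge T_0$. Next pick $\eta>0$ small enough that
\begin{align*}
    \rho(\delta^*_j+\eta)<\delta^*_j-\eta \quad\text{for every } j,
\end{align*}
where $\delta^*_j:=[\bfx^*]_j\ge\delta^*>0$. It suffices to require $\eta<\delta^*(1-\rho)/(1+\rho)$, and this works uniformly in $j$ because each $\delta^*_j\ge \delta^*$.

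By convergence, there is $T_1$ such that $|[\hat\bfx(k)]_j-\delta^*_j|<\eta$ for all $k\ge T_1$ and all $j$. Set $T_*:=\max(T_0,T_1,1)$. Now take $T\ge T_*$ and suppose Action~$j$ is never played in $[T,T')$. Since $T'\ge T$, both $\hat\bfx(T)$ and $\hat\bfx(T')$ are $\eta$-close to $\bfx^*$ coordinatewise. Then
\begin{align*}
    \delta^*_j-\eta<[\hat\bfx(T')]_j\le\rho\,[\hat\bfx(T)]_j<\rho(\delta^*_j+\eta)<\delta^*_j-\eta,
\end{align*}
which is a contradiction. Finally, the interval is nonempty: $T'>T$ because $\beta T>0$ and $T\ge1$.

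The only delicate points are bookkeeping. The first is the normalization $k+k_1$ when $k_1=0$, where $\hat\bfx(0)=\zerovector_n$. This causes no trouble because the argument only uses $k\ge T\ge 1$, where $\hat\bfx(k)\in S_n$. The second is making the choice of $\eta$ uniform over the finitely many actions $j$. I do not expect a genuine obstacle; the lemma is a quantitative restatement of the fact that a fully mixed limit forces every action to keep being played at a linear rate.
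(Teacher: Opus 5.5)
Your proof is correct and is essentially the paper's argument: both rest on the fact that $(k+k_1)[\hat\bfx(k)]_j$ increases by one exactly when Action~$j$ is played, combined with convergence to the fully mixed point $\bfx^*$. The paper lower-bounds this increment over $[T,\lceil(1+\beta)T\rceil)$ directly by $\beta T\delta^* - 8(1+\beta)T\,\bar e(T)>0$, which as a side effect shows each action is played linearly often in the window. You argue contrapositively: a zero increment would shrink the $j$-th coordinate by a factor at most $\rho<1$, which is incompatible with $\eta$-closeness to $\bfx^*$ at both endpoints.
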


\begin{pf}
By the update rule~\eqref{eq:dynamics}, the number of times Action~$j$ is played during $[T,T')$, where $T'=\lceil(1+\beta)T\rceil$, equals $(T'+k_1)[\hat\bfx(T')]_j-(T+k_1)[\hat\bfx(T)]_j$. Let $\bar e(T):=\sup_{t\ge T}\norm{\hat\bfx(t)-\bfx^*}_\infty$, which tends to zero as $T\to\infty$. Then, for $T\ge\max(k_1,1)$,
\begin{align*}
    (T'+k_1)[\hat\bfx(T')]_j - (T+k_1)[\hat\bfx(T)]_j
    &\ge (T'-T)[\bfx^*]_j - 2(T'+k_1)\,\bar e(T)\\
    &\ge \beta T\delta^* - 8(1+\beta)T\,\bar e(T),
\end{align*}
which is strictly positive once $\bar e(T) < \beta\delta^*/\bigl(16(1+\beta)\bigr)$.
\end{pf}

\begin{proposition}\label{prop:boundary_instability}
Under Assumptions~\ref{asspt:no_dominated}--\ref{asspt:deduced_from_A1}, if $\hat\bfx(0)\notin\nex$, then the Player~1 empirical strategy under FP cannot converge to any point $\bfx^*\in\partial\nex$.
\end{proposition}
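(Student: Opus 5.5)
Suppose, for contradiction, that $\hat\bfx(k)\to\bfx^*\in\partial\nex$. We work with the cumulative payoff vector $\bfU(t):=(t+k_2)A\hat\bfy(t)\in\RR^n$; Player~1 plays Action~$j$ at time $t$ only if $[\bfU(t)]_j=\max_l[\bfU(t)]_l$. Write $\bfU(t+1)=\bfU(t)+\bfc_{q_t}$. Since $\bfc_1=v\onevector_n$, non-deviation steps shift all coordinates equally. Hence the relative vector $\bfU(t)-\min_l[\bfU(t)]_l\onevector_n$ changes only at deviation steps, and the spread $\phi(\bfU(t))$ changes only at deviations.

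The first step localizes the deviations. Let $I=\bry(\bfx^*)$. By continuity, there is a neighborhood of $\bfx^*$ on which $\bry(\bfx)\subseteq I$. So for all $t\ge T_0$, every deviation uses a column $\bfc_i$ with $i\in I\setminus\{1\}$, and Lemma~\ref{lemma:binding} guarantees this set is nonempty. Let $D(t)$ count the deviations in $[T_0,t)$. By the discussion after Proposition~\ref{prop:interior_instability}, $D(t)\to\infty$. Let $\bfs(t)$ be the sum of the deviation columns used. Then $\bfs(t)=D(t)\,\bfw(t)$ with $\bfw(t)\in W_I$, so Lemma~\ref{lemma:no_constant} gives $\phi(\bfs(t))\ge\Delta_0 D(t)$. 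Since $\bfU(t)=\bfU(T_0)+(t-T_0-D(t))v\onevector_n+\bfs(t)$, we get
\begin{align*}
\phi(\bfU(t))\ge \Delta_0 D(t)-C_0
\end{align*}
for a constant $C_0$. Thus at time $t$ some action $j_t$ lags the maximum by at least $\Delta_0D(t)-C_0$.

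The second step is the climbing count. For the lagging action $j_t$ to be played again, its gap to the maximum must close. Non-deviation steps do not change gaps, and each deviation changes any gap by at most $M:=2\max_{i,l}|a_{li}|$. Closing the gap therefore requires at least $(\Delta_0D(t)-C_0)/M$ further deviations. Lemma~\ref{lemma:quota} then says that for any $\beta$ and large $T$, action $j_T$ is played in $[T,(1+\beta)T)$. Consequently
\begin{align*}
D((1+\beta)T)-D(T)\ge c\,D(T)-C_1, \qquad c:=\Delta_0/M.
\end{align*}
Iterating over $T_r=(1+\beta)^rT$, and using $D\to\infty$, gives $D(T_r)\gtrsim(1+c/2)^r$ while $T_r=(1+\beta)^rT$. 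Choosing $\beta$ small enough that $1+\beta<1+c/2$, we get $D(T_r)/T_r\to\infty$, which is impossible because $D(t)\le t$.

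The main obstacle is making the gap-closing bound rigorous. The lagging action at time $T$ must be played within the window, but it is the gap of that specific action that must shrink, and that gap is at least the spread minus nothing only for the argmin coordinate. We will therefore fix $j_T\in\argmin_l[\bfU(T)]_l$; its gap is exactly $\phi(\bfU(T))$. The worry is the lower bound on $\phi(\bfU(T))$ itself, which needs Lemma~\ref{lemma:no_constant} uniformly over all convex combinations, and this is exactly what $\Delta_0$ provides. A secondary check is that the constant $C_0$ absorbs the initial prior $\hat\bfy(0)$ and the pre-$T_0$ deviations. We also need $D(T)$ to be large enough that $cD(T)-C_1\ge (c/2)D(T)$, which holds since $D\to\infty$.
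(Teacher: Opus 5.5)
Your proposal is correct and follows the paper's proof essentially step for step. The shared steps are: localization to $I=\bry(\bfx^*)$; the spread bound $\phi\ge\Delta_0\mathcal{D}-C$ from Lemma~\ref{lemma:no_constant}; tracking an action in the argmin of the cumulative payoffs; the bounded per-deviation climb; and the geometric-growth contradiction with Lemma~\ref{lemma:quota}. The differences are only cosmetic: you use the cruder per-deviation bound $M=2\max_{i,l}|a_{li}|$ in place of $\kappa=\max_{i\in I\setminus\{1\}}\phi(\bfc_i)$, and you ignore the ceiling in Lemma~\ref{lemma:quota}, which is harmless for large $T$ and which the paper handles by taking $T_0\ge 4/d$.
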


\begin{pf}
Suppose, for contradiction, that $\hat\bfx(k)\to\bfx^*\in\partial\nex$, and let $I=\bry(\bfx^*)$, so that $1\in I$ and $|I|\ge2$ by Lemma~\ref{lemma:binding}.

\noindent\textbf{Step 1 (localization).}
For every $j\notin I$ we have $\bfc_j^\top\bfx^*>v$, so by continuity there exist $\ep>0$ and $K_0\in\NN$ such that $\bfc_j^\top\bfx>v$ for all $j\notin I$ and all $\bfx\in B_\ep(\bfx^*)\cap S_n$, and $\hat\bfx(k)\in B_\ep(\bfx^*)\cap S_n$ for all $k\ge K_0$. Hence $\bry(\hat\bfx(k))\subseteq I$ for all $k\ge K_0$, i.e., Player~2 plays only actions in $I$ from time $K_0$ on. Let $a_t$ denote the action chosen by Player~2 at time $t$. In particular, every deviation at time $t\ge K_0$ satisfies $a_t\in I\setminus\{1\}$. Write, for $T\ge K_0$,
\begin{align*}
    \mathcal{D}(T) := \left|\bigl\{t \in [K_0, T) : a_t \in I\setminus\{1\}\bigr\}\right|.
\end{align*}
By the paragraph following Proposition~\ref{prop:interior_instability} (which uses $\hat\bfx(0)\notin\nex$), Player~2 deviates from Action~1 infinitely often, so $\mathcal{D}(T)\to\infty$.

\noindent\textbf{Step 2 (evolution of cumulative payoffs).}
By the update rule~\eqref{eq:dynamics},
\begin{align*}
    (T+k_2)\hat\bfy(T) = (K_0+k_2)\hat\bfy(K_0) + \sum_{t=K_0}^{T-1}\bfe_{a_t},
\end{align*}
so, for $T\ge K_0$, Player~1's cumulative payoff vector decomposes as
\begin{align*}
    (T+k_2)A\hat\bfy(T) = \mathbf{u}(K_0) + N_1(T)\,v\onevector_n + \mathbf{s}(T),
\end{align*}
where $\mathbf{u}(K_0):=(K_0+k_2)A\hat\bfy(K_0)$, $N_1(T)$ is the number of plays of Action~1 during $[K_0,T)$ (recall $\bfc_1=v\onevector_n$), and
\begin{align*}
    \mathbf{s}(T) := \sum_{\substack{K_0\le t<T\\ a_t\in I\setminus\{1\}}}\bfc_{a_t}.
\end{align*}
Since adding multiples of $\onevector_n$ does not change the set of maximizers,
\begin{align}\label{eq:score_argmax}
    \brx(\hat\bfy(T)) = \argmax_{j\in[n]}\,[\mathbf{q}(T)]_j,
    \qquad
    \mathbf{q}(T):=\mathbf{u}(K_0)+\mathbf{s}(T).
\end{align}
In other words, $\mathbf{q}(T)$ is effectively Player~1's cumulative payoff vector: it differs from $(T+k_2)A\hat\bfy(T)$ only by a multiple of $\onevector_n$, and hence determines Player~1's best responses.

\noindent\textbf{Step 3 (spread lower bound).}
Whenever $\mathcal{D}(T)\ge1$, define the average deviation direction
\begin{align*}
    \bar\bfw(T)
    :=
    \frac{1}{\mathcal{D}(T)}
    \sum_{\substack{K_0\le t<T\\ a_t\in I\setminus\{1\}}}
    \bfc_{a_t}
    = \frac{\mathbf{s}(T)}{\mathcal{D}(T)}
    \;\in\;
    W_I.
\end{align*}
Since $\phi$ is positively homogeneous, Lemma~\ref{lemma:no_constant} gives
\begin{align*}
    \phi(\mathbf{s}(T)) = \mathcal{D}(T)\,\phi\bigl(\bar\bfw(T)\bigr) \ge \Delta_0\,\mathcal{D}(T).
\end{align*}
Setting $C_0:=\norm{\mathbf{u}(K_0)}_\infty$ and using $\phi(\bfx+\bfy)\ge\phi(\bfx)-2\norm{\bfy}_\infty$,
\begin{align}\label{eq:score_spread}
    \max_{j\in[n]} g_j(T) = \phi(\mathbf{q}(T)) \ge \Delta_0 \mathcal{D}(T)-2C_0,
\end{align}
where $g_j(T):=\max_{l\in[n]}[\mathbf{q}(T)]_l-[\mathbf{q}(T)]_j\ge0$ is the cumulative payoff gap between the best actions and Action~$j$. If $g_j(T)=0$, then Action~$j$ attains the maximum cumulative payoff, i.e., it is among the best responses of Player~1; conversely, by \eqref{eq:score_argmax}, Player~1 can play Action~$j$ at time $T$ only if $g_j(T)=0$.

\noindent\textbf{Step 4 (climb cost).}
The vector $\mathbf{q}$ changes only at deviations, and a deviation to column $i$ changes each gap $g_j$ by at most $\phi(\bfc_i)$ in absolute value, since
\begin{align*}
    g_j(t+1)-g_j(t) = \Bigl(\max_l\bigl([\mathbf{q}(t)]_l+[\bfc_i]_l\bigr)-\max_l[\mathbf{q}(t)]_l\Bigr)-[\bfc_i]_j
    \in\bigl[-\phi(\bfc_i),\,\phi(\bfc_i)\bigr].
\end{align*}
Let $\kappa:=\max_{i\in I\setminus\{1\}}\phi(\bfc_i)\ge\Delta_0>0$. Then, for any $T\le\tau$ and any $j$ with $g_j(T)\ge G$ and $g_j(\tau)=0$, the number of deviations during $[T,\tau)$ is at least $G/\kappa$.

\noindent\textbf{Step 5 (slow climbing).}
Set $d:=\Delta_0/(2\kappa)\in(0,1/2]$ and $\beta:=d/4$. Recall that, since $\bfx^*\in\nex$ and Assumption~\ref{asspt:boundary} gives $\nex\cap\partial S_n=\emptyset$, we have $\delta^*=\min_j[\bfx^*]_j>0$, so Lemma~\ref{lemma:quota} applies with this $\beta$ and yields a corresponding time $T_*$. Choose $T_0\ge\max(T_*,\,4/d,\,K_0)$ large enough that $\mathcal{D}(T_0)\ge\max(1,\,4C_0/\Delta_0)$, and define $T_{N+1}:=\lceil(1+\beta)T_N\rceil$ for $N\ge0$, inductively; note $T_{N+1}\le(1+d/2)T_N$ because $T_N\ge T_0\ge 4/d$.

Fix $N\ge0$, and let $j^* \in \argmin_j[\mathbf{q}(T_N)]_j$. By \eqref{eq:score_spread} and $\mathcal{D}(T_N)\ge \mathcal{D}(T_0)\ge4C_0/\Delta_0$,
\begin{align*}
    g_{j^*}(T_N) = \phi(\mathbf{q}(T_N)) \ge \Delta_0\mathcal{D}(T_N)-2C_0 \ge \tfrac{\Delta_0}{2}\,\mathcal{D}(T_N).
\end{align*}
By Lemma~\ref{lemma:quota}, Player~1 plays Action~$j^*$ at some time $\tau\in[T_N,T_{N+1})$, so $g_{j^*}(\tau)=0$, and Step~4 yields at least $\tfrac{\Delta_0}{2\kappa}\mathcal{D}(T_N)=d\,\mathcal{D}(T_N)$ deviations during $[T_N,\tau)\subseteq[T_N,T_{N+1})$. Hence
\begin{align*}
    \mathcal{D}(T_{N+1}) \ge (1+d)\,\mathcal{D}(T_N)
    \qquad\text{for all }N\ge0,
\end{align*}
and therefore $\mathcal{D}(T_N)\ge(1+d)^N$. On the other hand, $\mathcal{D}(T_N)\le T_N\le(1+d/2)^NT_0$, so
\begin{align*}
    \Bigl(\frac{1+d}{1+d/2}\Bigr)^N \le T_0
    \qquad\text{for all }N\ge0,
\end{align*}
which is impossible since the left-hand side tends to infinity. This contradiction completes the proof.
\end{pf}

Every point of $\nex$ lies either in $\intr(\nex)$ or in $\partial\nex$. Convergence to a point of $\intr(\nex)$ was ruled out at the end of the previous subsection, and Proposition~\ref{prop:boundary_instability} rules out convergence to a point of $\partial\nex$. Since the reduction to Assumption~\ref{asspt:deduced_from_A1} does not affect the conclusion, this completes the proof of Theorem~\ref{thm:non_convergence}.

We close with the conceptual message of the proof: the fact that Player~2's unique NE (Corollary~\ref{cor:unique_NE_player2}) is a pure action is precisely what prevents equilibrium selection on Player~1's side. A fully mixed limit point would require Player~2's deviations to keep all of Player~1's actions indifferent on average, so that lagging actions could climb back at bounded cost. Lemma~\ref{lemma:no_constant} rules this out: any such indifference-preserving mixture would constitute a second equilibrium strategy of Player~2.

\section{Discussion}\label{sec:discussion}
\subsection{Necessity of the Assumptions}\label{subsec:necessity}
We emphasize that violating Assumptions~\ref{asspt:positive_measure} and~\ref{asspt:boundary} does not by itself imply pointwise non-convergence: there are zero-sum games that violate one or both of these assumptions and whose FP dynamics still converge, or are conjectured to converge, pointwise. For instance, the game with payoff matrix $\tilde A$ in Equation~\eqref{eq:zz_mat}, introduced in Section~\ref{sec:tie_non_conv}, fails to satisfy Assumption~\ref{asspt:boundary} and yet exhibits pointwise convergence under several natural tie-breaking rules.

\begin{figure}
  \centering
  \includegraphics[width=\linewidth]{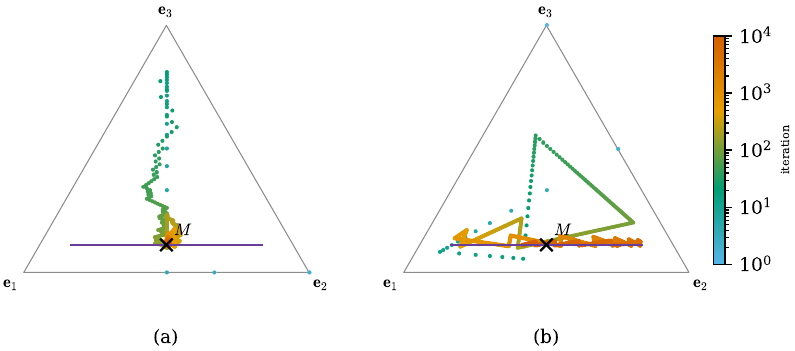}
  \caption{Two sample FP trajectories of Player~1 for the game in \eqref{eq:converging_example}, which violates Assumption~\ref{asspt:positive_measure}: its equilibrium set is the measure-zero purple segment with midpoint $M$ (black cross). Both runs start from no prior and break ties uniformly at random; each run is $10^4$ iterations, and the color indicates the iteration. (a)~A realization that appears to settle near $M$. (b)~A realization that keeps drifting along the equilibrium segment.}
  \label{fig:lower_dim_convergence}
\end{figure}

\begin{figure}
  \centering
  \includegraphics[width=\linewidth]{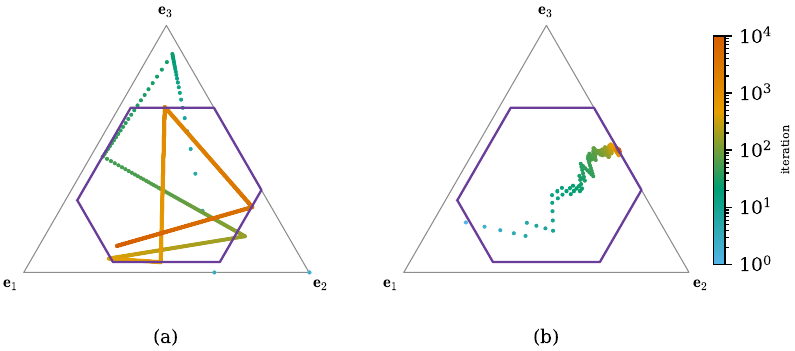}
  \caption{Sample FP trajectories of Player~1 for the game in \eqref{eq:without_a2}, which violates Assumption~\ref{asspt:boundary}: its equilibrium set (purple hexagon) intersects $\partial S_3$. Ties are broken uniformly at random; each run is $10^4$ iterations, and the color indicates the iteration. (a)~Initialized with no prior, the trajectory does not appear to converge. (b)~Initialized as in Equation~\eqref{eq:without_a2_conv}, the trajectory appears to converge.}
  \label{fig:without_a2}
\end{figure}


The behavior of FP for games that do not satisfy Assumption~\ref{asspt:positive_measure} is more delicate. Figure~\ref{fig:lower_dim_convergence} illustrates two numerical trajectories for the zero-sum game
\begin{align}\label{eq:converging_example}
\tilde A = \begin{pmatrix} 
    1/8 & 1 & 0 & 0 \\ 
    1/8 & 0 & 1 & 0 \\ 
    0 & 0 & 0 & 1 
    \end{pmatrix}.
\end{align}
This game has infinitely many equilibria and satisfies Assumption~\ref{asspt:boundary}, but not Assumption~\ref{asspt:positive_measure}. When ties are broken uniformly at random among multiple best responses, the trajectory is quickly attracted to the equilibrium segment. The subsequent behavior, however, differs across realizations: in some runs the empirical strategy appears to settle near the midpoint $M$ of the segment (Figure~\ref{fig:lower_dim_convergence}(a)), while in others it continues to drift along the segment (Figure~\ref{fig:lower_dim_convergence}(b)). Whether pointwise convergence occurs for this game remains unclear.

We also suspect that Assumptions~\ref{asspt:positive_measure} and~\ref{asspt:boundary} are not strictly necessary for non-convergence under particular initializations. Consider the zero-sum game
\begin{align}\label{eq:without_a2}
\tilde A = \begin{pmatrix} 
    1/24 & 0 & 0 & 1/8 & 0 & 1/8 \\ 
    1/24 & 1 & 0 & 1/8 & 1/8 & 0 \\ 
    1/24 & 0 & 1 & 0 & 1/8 & 1/8 
    \end{pmatrix}.
\end{align}
For this game, the equilibrium set $\tildenex$ intersects $\partial S_3$, so Assumption~\ref{asspt:boundary} fails. Nevertheless, numerical experiments suggest that FP need not converge for some initializations. For example, when Player~1 starts with no prior, i.e., $k_1=0$, the trajectory appears not to converge; see Figure~\ref{fig:without_a2}(a). On the other hand, the same game also admits initial conditions for which FP appears to converge. One such example is
\begin{align}\label{eq:without_a2_conv}
\hat \bfx(0) = \begin{pmatrix}
    \frac 3 4\\[0.05cm]\frac 1 8\\[0.05cm] \frac 1 8
\end{pmatrix}, \quad \hat \bfy(0) = \frac 1 6 \onevector_6, \quad k_1 = k_2 = 10,
\end{align}
for which the trajectory appears to converge; see Figure~\ref{fig:without_a2}(b).

\subsection{What Can Serve as a Stopping Criterion for Learning?}\label{subsec:stopping}

Part of the appeal of FP is that learning requires no knowledge of the payoff matrix: the dynamics are driven entirely by cumulative payoffs, so each player needs only the ability to choose, through experience, the action that would have performed best against the opponent's past play. This feedback is purely ordinal: the player ranks its actions by past performance, but never observes cardinal payoffs. In zero-sum games, this minimal feedback already suffices to learn equilibrium, in the sense that the empirical strategies converge to the NE set~\citep{Robinson}. This guarantee, however, is available only to an outside observer: a player who does not know the payoff matrix cannot verify that the current empirical strategies form a NE. A natural question therefore arises: by what observable criterion can a player recognize that equilibrium has been learned?

Perhaps the most tempting candidate is stabilization---declare learning complete once one's own empirical strategy, which each player can observe, stops moving. Our results show that this criterion is unreliable in a strong sense: in any game satisfying Assumptions~\ref{asspt:no_dominated}--\ref{asspt:boundary}, the empirical strategy of Player~1, initialized outside the equilibrium set, never stabilizes under any tie-breaking rule, even though it converges to the equilibrium set. A player waiting for its empirical strategy to settle would thus wait forever, despite having already learned the equilibrium in the set-convergence sense; moreover, the strategy it holds at any stopping time remains an artifact of that time. Therefore, at a minimum, pointwise convergence cannot serve as a stopping criterion, and identifying observable criteria that can is an interesting direction for future research.

\section{Conclusion}
In this paper, we studied pointwise convergence of fictitious play in zero-sum games. While classical results show that fictitious play converges to the Nash equilibrium set, our results show that this set convergence need not imply convergence to a single equilibrium point. In particular, we proved that whenever, in a game without strictly dominated actions, the equilibrium set of a player has positive measure and consists only of fully mixed strategies (Assumptions~\ref{asspt:no_dominated}--\ref{asspt:boundary}), fictitious play cannot converge pointwise for that player under any tie-breaking rule, provided the player's initial empirical strategy lies outside the equilibrium set, as is the case when the player starts with no prior information. The proof combines an inertia mechanism in the interior of the equilibrium set with a counting argument at its boundary, which rests on the observation that the opponent's deviations from its unique equilibrium action never reward all of the player's actions equally.

Several questions remain open. Our assumptions are sufficient but, as Section~\ref{subsec:necessity} illustrates, may not be necessary, and an exact characterization of the games and initializations for which pointwise convergence fails remains to be found. Moreover, since pointwise convergence cannot serve as a stopping criterion, it remains unclear by what observable criterion a player can recognize that equilibrium has been learned (Section~\ref{subsec:stopping}). It would also be interesting to investigate analogous selection questions for other classes of games and learning dynamics. We hope that the viewpoint developed here helps clarify the gap between convergence to the equilibrium set and convergence to a point in fictitious play.

\section*{Declaration of generative AI and AI-assisted technologies in the writing process}
During the preparation of this work, the author used Claude (Anthropic) to
assist with drafting and revising the manuscript, and as an aid in
exploring and formalizing parts of the boundary-instability argument of
Section~\ref{subsec:boundary_instability}. The author verified all
arguments, reviewed and edited all content, and takes full responsibility
for the content of the publication.

\appendix

\section{Omitted Proofs}\label{app:proof_oscillation}

\begin{pf}[of Proposition~\ref{prop:oscillation}]
The claim is immediate for $n=1$. We first show that \eqref{eq:odd_case} implies \eqref{eq:even_case}. Suppose that
\begin{align*}
    [\hat\bfx(2^n)]_2=\frac{1}{2}
\end{align*}
for some $n\ge 1$. Since the numerator of $1/2$ in lowest terms is odd, the tie-breaking rule selects Action~2 at time $2^n$. We claim that
\begin{align*}
    [\hat\bfx(2^n+k)]_2=\frac{2^{n-1}+k}{2^n+k}
    \qquad \text{for } 0\le k\le 2^{n-1}.
\end{align*}
The case $k=0$ is exactly \eqref{eq:odd_case}. Now suppose the claim holds for some $k<2^{n-1}$, and write
\begin{align*}
    \frac{2^{n-1}+k}{2^n+k}=\frac{p}{q}
\end{align*}
in lowest terms. Since
\begin{align*}
    (2^n+k)-(2^{n-1}+k)=2^{n-1},
\end{align*}
every power of $2$ dividing $2^{n-1}+k$ also divides $2^n+k$. Hence, after reduction, the numerator $p$ is odd, so the tie-breaking rule again selects Action~2. Therefore,
\begin{align*}
    [\hat\bfx(2^n+k+1)]_2
    =\frac{2^{n-1}+k+1}{2^n+k+1}.
\end{align*}
This proves the claim. Setting $k=2^{n-1}$ gives \eqref{eq:even_case}.

Next, we show that \eqref{eq:even_case} implies \eqref{eq:odd_case} with $n$ replaced by $n+1$. Suppose that
\begin{align*}
    [\hat\bfx(2^n+2^{n-1})]_2=\frac{2}{3}.
\end{align*}
We claim that
\begin{align*}
    [\hat\bfx(2^n+2^{n-1}+k)]_2
    =\frac{2^n}{2^n+2^{n-1}+k}
    \qquad \text{for } 0\le k\le 2^{n-1}.
\end{align*}
Again, the case $k=0$ is exactly \eqref{eq:even_case}. Suppose the claim holds for some $k<2^{n-1}$, and write
\begin{align*}
    \frac{2^n}{2^n+2^{n-1}+k}=\frac{p}{q}
\end{align*}
in lowest terms. Since
\begin{align*}
    2^n < 2^n+2^{n-1}+k < 2^{n+1},
\end{align*}
the denominator is not divisible by $2^n$. Hence the reduced numerator $p$ remains even, and the tie-breaking rule selects Action~1. Therefore,
\begin{align*}
    [\hat\bfx(2^n+2^{n-1}+k+1)]_2
    =\frac{2^n}{2^n+2^{n-1}+k+1}.
\end{align*}
This proves the claim. Setting $k=2^{n-1}$ yields
\begin{align*}
    [\hat\bfx(2^{n+1})]_2=\frac{1}{2},
\end{align*}
which is \eqref{eq:odd_case} with $n$ replaced by $n+1$.
\end{pf}

\begin{pf}[of Lemma~\ref{lemma:value_game_minus_1}]
Suppose, for contradiction, that $v_{-1}\le v$. Let $\nex^{(-1)}$ denote Player~1's equilibrium set for the reduced game $A_{-1}$. If $\mu(\nex^{(-1)})>0$, then Proposition~\ref{prop:existence_of_one_vector} applied to $A_{-1}$ would imply that some column of $A_{-1}$ is equal to $v_{-1}\onevector_n$, contradicting Assumption~\ref{asspt:deduced_from_A1}. Therefore $\mu(\nex^{(-1)})=0$.

By definition of the value $v_{-1}$, for every $\bfx\in S_n$,
\begin{align*}
    \min_{i\in\{2,\dots,m\}} \bfc_i^\top \bfx \le v_{-1},
\end{align*}
and equality holds if and only if $\bfx\in \nex^{(-1)}$. Hence, for almost every $\bfx\in S_n$,
\begin{align*}
    \min_{i\in\{2,\dots,m\}} \bfc_i^\top \bfx < v_{-1}\le v.
\end{align*}
Since $\bfc_1=v\onevector_n$, we have $\bfc_1^\top \bfx=v$ for every $\bfx\in S_n$. Therefore, for almost every $\bfx\in S_n$,
\begin{align*}
    \min_{i\in[m]} \bfc_i^\top \bfx < v.
\end{align*}
But $\bfx\in \nex$ if and only if $\min_{i\in[m]} \bfc_i^\top \bfx=v$, contradicting Assumption~\ref{asspt:positive_measure}, which states that $\mu(\nex)>0$. Thus $v_{-1}>v$.
\end{pf}

\begin{pf}[of Corollary~\ref{cor:unique_NE_player2}]
    Let $\bfy^* \in \ney$, and write $\alpha = [\bfy^*]_1$. Since $\bfy^*$ is an equilibrium strategy of Player~2, we have
    \begin{align*}
        v = \max_{j \in [n]} \bfr_j^\top \bfy^* 
          = v\alpha + \max_{j \in [n]} (\bfr_j)_{-1}^\top (\bfy^*)_{-1}.
    \end{align*}
    If $\alpha < 1$, define
    \begin{align*}
        \bar \bfy := \frac{(\bfy^*)_{-1}}{1-\alpha} \in S_{m-1}.
    \end{align*}
    Then
    \begin{align*}
        \max_{j \in [n]} (\bfr_j)_{-1}^\top (\bfy^*)_{-1}
        = (1-\alpha)\max_{j \in [n]} (\bfr_j)_{-1}^\top \bar \bfy
        \ge (1-\alpha) v_{-1},
    \end{align*}
    where the last inequality follows from the definition of $v_{-1}$. Hence
    \begin{align*}
        v \ge v\alpha + (1-\alpha)v_{-1}.
    \end{align*}
    By Lemma~\ref{lemma:value_game_minus_1}, we have $v_{-1}>v$. Therefore the above inequality is impossible unless $\alpha=1$. Thus $[\bfy^*]_1=1$, and since $\bfy^* \in S_m$, it follows that $\bfy^*=\bfe_1$.
\end{pf}

\begin{pf}[of Lemma~\ref{lemma:one_component_less_v}]
Let $\bfx \in \partial S_n$. Since Assumption~\ref{asspt:boundary} implies $\partial S_n \cap \nex=\emptyset$, the point $\bfx$ is not a solution of the max-min problem~\eqref{eq:lp_duality_x}. Hence
\begin{align*}
    \min\{\bfc_1^\top \bfx,\dots,\bfc_m^\top \bfx\}<v.
\end{align*}
Because $\bfc_1=v\onevector_n$, we have $\bfc_1^\top \bfx=v$, so there must exist some $i\ge 2$ such that
\begin{align*}
    \bfc_i^\top \bfx < v.
\end{align*}

Now suppose, for contradiction, that a column $\bfc_j$ has no entry strictly less than $v$. Then every entry of $\bfc_j$ is at least $v$. By Assumption~\ref{asspt:deduced_from_A1}, we have $\bfc_j\neq v\onevector_n$, so in fact $\bfc_j$ has at least one entry strictly greater than $v$. Therefore, for every $\bfx \in \intr(S_n)$,
\begin{align*}
    \bfc_j^\top \bfx > v = \bfc_1^\top \bfx.
\end{align*}
On the other hand, for every $\bfx \in \partial S_n$, the first part of the proof shows that there exists some $i\ge 2$ such that
\begin{align*}
    \bfc_i^\top \bfx < v \le \bfc_j^\top \bfx.
\end{align*}
Hence column $j$ is never a best response of Player~2 to any $\bfx \in S_n$. Since an action that is never a best response is strictly dominated by some mixed action~\citep{pearce1984rationalizable}, Action~$j$ is strictly dominated for Player~2, contradicting Assumption~\ref{asspt:no_dominated}. Therefore every column of $A$ must have at least one entry strictly less than $v$.
\end{pf}

\begin{pf}[of Lemma~\ref{lemma:binding}]
Since $\bfx^* \in \nex$, we have $\min_{l\in[m]}\bfc_l^\top \bfx^* = v$, and $\bfc_1^\top\bfx^*=v$ gives $1\in I$. Suppose, for contradiction, that $\bfc_i^\top\bfx^*>v$ for every $i\ge2$. By continuity there is a neighborhood $U$ of $\bfx^*$ in $S_n$ on which $\bfc_i^\top\bfx>v$ for all $i\ge2$; since $\bfc_1^\top\bfx=v$ identically, $U\subseteq\nex$, so $\bfx^*\in\intr(\nex)$, contradicting $\bfx^*\in\partial\nex$. Hence some $i\ge 2$ satisfies $\bfc_i^\top\bfx^*=v$, i.e., $i\in I$.
\end{pf}

\bibliographystyle{cas-model2-names}
\bibliography{bib}

\end{document}